\theoremstyle{definition}
\newtheorem{thm}{Theorem}
\newtheorem{rmk}[thm]{Remark}
\newtheorem{prop}[thm]{Proposition}
\newtheorem{cor}[thm]{Corollary}
\newtheorem{lem}[thm]{Lemma}
\newtheorem{exm}[thm]{Example}
\newtheorem{defi}[thm]{Definition}
\newcommand{\anton}[1]{{\color{red} (Anton: #1)}}
\newcommand{\RR}{\mathbb{R}}
\newcommand{\CC}{\mathbb{C}}
\newcommand{\KK}{\mathbb{K}}
\newcommand{\appx}{{\widetilde x}}
\newcommand{\pointnorm}[1]{\|(1,#1)\|}
\DeclareMathOperator{\erf}{erf}
\newcommand{\sage}{\texttt{SageMath}}
\newcommand{\mainfilecheck}[1]{0}
\def\Dfinite{$D$-finite}
\title{Effective certification of approximate solutions to systems of
  equations involving analytic functions}
\author{Michael Burr, Kisun Lee, Anton Leykin}
\begin{document}

\begin{abstract}
We develop algorithms for certifying an approximation to a nonsingular solution of a square system of equations built from univariate analytic functions.  These algorithms are based on the existence of oracles for evaluating basic data about the input analytic functions.  One approach for certification is based on $\alpha$-theory while the other is based on the Krawczyk generalization of Newton's iteration. We show that the necessary oracles exist for \Dfinite\ functions and compare the two algorithmic approaches for this case using our software implementation in \sage.
\end{abstract}

\maketitle

\section{Introduction}\label{sec:introduction}

The main problem that we consider in this paper is to {\em certify} an approximation of a nonsingular root, i.e., a root of the multiplicity $1$ of a function $F:U \rightarrow\CC^n$ (or $\RR^n$), where $U\subset\CC^n$ (or $\RR^n$) is an open subset, corresponding to the following square system of equations:
\begin{equation*}
F(x)=\begin{bmatrix}
F_1(x_1,\dots, x_n)\\
\vdots\\
F_n(x_1,\dots, x_n)
\end{bmatrix} = 0.
\end{equation*}
The minimal goal of certification is, given (a finite description of) a compact region $I\subset\CC^n$ (or $\RR^n$) that is conjectured to contain a unique root, execute an algorithm which produces a certificate for the existence and uniqueness of a root in $I$.  The algorithms considered in this paper are based on {\em $\alpha$-theory} and the {\em Krawczyk test}, which, in turn, are based on Newton iteration. 

Throughout this paper, we focus on the theory for the case of $\CC$ and provide a few remarks identifying the changes that must be made to use these techniques in $\RR$.  

Recall that the Newton operator $N_F(x)$ for a square differentiable function $F$ is defined as 
$$
N_F(x)=\begin{cases}x-F'(x)^{-1}F(x)&\text{if $F'(x)$ is invertible and}\\x&\text{otherwise}\end{cases},
$$
where $F'(x)$ is the Jacobian of $F$ at $x$.  Moreover, the fixed points of $N_F(x)$ correspond to roots of $F$ or where $F'(x)$ fails to be invertible.  Additionally, for $\tilde{x}$ sufficiently close to a nonsingular root $x^\ast$, the {\it $k$-th Newton iteration} $N_F^k(\tilde{x})$, defined by applying the operator $k$ times, converges to $x^\ast$.  For more details, see, e.g., \cite[Chapter 8]{blum2012complexity}.  Since we focus on the Newton operator, we do not discuss alternate approaches for certification which use global methods or do not use fixed points, see, e.g., \cite{MooreKioustelidis}.

The approach of {\em $\alpha$-theory} starts with a suspected approximate solution $\appx$ and attempts to construct a ball containing both $\appx$ and a unique root of $F$ with the guarantee that, starting at any point in the ball, {\em Newton's method} converges quadratically to a root of $F$.  The $\alpha$-theory-based approach uses point estimates on the value of $F$ and (all of) its derivatives.  We develop explicit estimates of this form in this paper as practical and algorithmic extensions of an {\em $\alpha$-test}, presented in its initial setting by Smale~\cite{smale1986newton}.

The {\em Krawczyk operator} is an interval-based certified generalization of the Newton operator.  The Krawczyk operator starts with an $n$-dimensional interval $I$ containing a suspected root and uses interval arithmetic to determine whether a Newton-like operator is contractive within $I$.  In this case, all points of $I$ converge to a unique fixed point without a guarantee on their convergence speed.  The Krawczyk operator approach uses estimates on the value of $F'$ over $I$.  We adapt a version of the Krawczyk operator that provides a {\em Krawczyk test}, developed in its initial setting by Krawczyk~\cite{krawczyk1969newton}.

The main contributions of this paper are descriptions of procedures to certify nonsingular roots of square systems.  These procedures are described in terms of oracles, and in cases where these oracles exist, our theoretical tests based both on $\alpha$-theory and the Krawczyk operator lead to certification algorithms.  We show that these oracles can be implemented for systems of equations involving \Dfinite\ functions. In particular, we extend the range of effective application of $\alpha$-theory.  We implement our algorithms in \sage\ and provide experimental analyses of our two approaches. Implementations and computation examples are available at\\
\nopagebreak[4]
\centerline{\url{https://github.com/klee669/DfiniteComputationResults}}

\subsection{Setting}

The theory originally derived for $\alpha$-theory \cite{smale1986newton} and the Krawczyk operator \cite{krawczyk1969newton} applies to arbitrary square systems of analytic functions.  Both approaches, however, require various computations and data which are typically inaccessible for arbitrary analytic functions, e.g., the $\gamma$-function in $\alpha$-theory can be based on an infinite number of derivatives.  Therefore, considerable work has gone into finding cases where these tests can be applied algorithmically.

We describe the classes of functions by seeding a class with a set of {\em basic functions} (we informally call them {\em ingredients}) and then extending it by recursively applying the basic arithmetic operations (addition and multiplication) to the basic functions and constants finitely many times.   By adding more variables and equations, other operations, such as division and composition, are possible in the construction.  

More explicitly, suppose that the basic functions include both the coordinate functions and additional basic functions $\{g_1,\dots,g_m\}$.  Then, the systems of equations that we construct can be written in the following form (after an appropriate change of variables):
\begin{equation}\label{eq:SystemWithIngredients}
F(x):=\begin{bmatrix}
p_1(x_1,\dots,x_{n+m})\\
\vdots\\
p_n(x_1,\dots,x_{n+m})\\
x_{n+1}-g_1(x_1)\\
\vdots\\
x_{n+m}-g_m(x_m)
\end{bmatrix}
\end{equation}
where $p_i\in \CC[x_1,\dots, x_{n+m}]$ for $i=1,\dots, n$.

Suppose that the basic functions are the coordinate functions $\{x_1,\dots,x_n\}$, then the class of functions is $\CC[x]=\CC[x_1,\dots,x_n]$, i.e., the class of polynomial systems of equations.  This class appears frequently in geometric problems (e.g.,\cite{bozoki2015seven}) and can be effectively studied via $\alpha$-theory and the Krawczyk operator since all but finitely many derivatives vanish. For a practical implementation of the $\alpha$-theory approach in this setting, see~\cite{hauenstein2012algorithm}.
	
When the basic functions are the coordinate functions along with univariate analytic functions which satisfy linear differential equations with constant coefficients, the resulting class of functions are the polynomial-exponential functions.  In \cite{hauenstein2017certifying}, Hauenstein and Levandovskyy extend $\alpha$-theory-based certification to this case.

In this paper, we take our general approach and apply it to the class of functions built from the coordinate functions and \Dfinite\ functions.  We recall that a \textit{$D$-finite function} $g$ is a solution to a linear differential equation with polynomial coefficients $p_k(t)\in \CC[t]$, i.e., a differential equation of the following form:
\begin{equation}\label{eq:Dfinite}
p_r(t)g^{(r)}(t)+\cdots p_1(t)g'(t)+p_0(t)g(t)=0.
\end{equation}
If $p_r(0)$ does not vanish, then there is a unique function $g(t)$ which satisfies both Equation (\ref{eq:Dfinite}) and specified initial conditions $g(0)=c_0$, $g'(0)=c_1$, $\dots$, and $g^{(r-1)}(0)=c_{r-1}$. We call the corresponding class of functions {\em polynomial-\Dfinite\ functions}.  The generalization of effective $\alpha$-theory-based algorithms to this larger class of functions is one of the main advances of this paper.

\subsection{Paper organization}
The structure of the remainder of this paper is as follows: We recall and present the general theory for certifying solutions to systems of analytic functions using the Krawczyk operator and $\alpha$-theory in \S\S\ref{sec:Krawczyk} and \ref{sec:alpha-theory}, respectively.  In particular, we explicitly describe the oracles which are needed for the application of these tests.  In \S\ref{sec:Dfinite}, as an example, we illustrate how these oracles exist for \Dfinite\ functions.  Software implementation, examples, and applications are discussed in \S\ref{sec:examples}.  The framework that we introduce in this paper has the potential to be applicable to other families of basic functions.  We present remarks on the potential development of these techniques in~ \S\ref{sec:conclusion}.




\section{Certification using the Krawczyk Method}\label{sec:Krawczyk}


In this section, we develop the theory of interval arithmetic and the Krawczyk operator, an interval-based generalization of the Newton operator.  We explicitly describe the oracles which are necessary so that the theory described in this section can be developed into an algorithm.  In \S\ref{sec:Dfinite}, we show that these oracles exist for \Dfinite\ functions, and, so, the Krawczyk operator can be used to certify roots of \Dfinite\ functions.

\subsection{Interval Arithmetic}

Interval arithmetic performs conservative computations with intervals in order to produce certified computations.  For example, suppose that $[a,b]$ and $[c,d]$ are isolating intervals for $x,y\in\mathbb{R}$, i.e., $x\in[a,b]$ and $y\in[c,d]$.  Then, interval arithmetic formalizes the conclusion that $x+y\in[a+c,b+d]$.  More precisely, for any arithmetic operation $\odot$,
$$
[a,b]\odot[c,d]=\{x\odot y:x\in[a,b],y\in[c,d]\}.
$$
For the standard arithmetic operations, there are formulas for the interval versions of these operators, see, e.g., \cite{moore2009introduction} for more details.  

These methods can extend to complex numbers by writing intervals in $\CC$ as $[a_1,a_2]+[b_1,b_2]i$.  In this case, multiplication of complex interval numbers is computed as 
\ifthenelse{\mainfilecheck{1} > 0}{\begin{multline}
	([a_1,a_2]+[b_1,b_2]i)([c_1,c_2]+[d_1,d_2]i)\\=([a_1,a_2][c_1,c_2]-[b_1,b_2][d_1,d_2]) \label{eq:complex:mult}\\+([a_1,a_2][d_1,d_2]+[b_1,b_2][c_1,c_2])i.
	\end{multline}
	}{\begin{multline}
	([a_1,a_2]+[b_1,b_2]i)([c_1,c_2]+[d_1,d_2]i)\\=([a_1,a_2][c_1,c_2]-[b_1,b_2][d_1,d_2])+([a_1,a_2][d_1,d_2]+[b_1,b_2][c_1,c_2])i.\label{eq:complex:mult}
	\end{multline}
	}
We observe that the image of this product may be strictly larger than the set of possible products of elements from the pair of complex intervals.  This formulation, however, is critically important in our development of the Krawczyk method in Section \ref{sec:KrawczykMethoddetails}.

We write $\mathbb{IC}$ for the set of intervals in $\mathbb{C}$, and we write $\mathbb{IC}^n$ for the set of $n$-dimensional boxes in $\mathbb{C}^n$, i.e., $n$-fold products of intervals in $\mathbb{C}$.  For an open set $U\subseteq\CC^n$, we write $\mathbb{I}U$ for intervals in $\mathbb{IC}^n$ which are contained in $U$.  For a function $F:U\rightarrow\CC$, an oracle interval extension of $F$ is an oracle $\square F:\mathbb{I}U\rightarrow\mathbb{IC}$ such that for any $I\in\mathbb{I}U$,  
$$
\square F(I)\supseteq F(I):=\{F(x):x\in I\}.
$$
In other words, $\square F(I)$ is an interval containing the image of $F$ on $I$.  For polynomial systems, such oracles can be constructed using interval arithmetic, see, e.g., \cite{moore2009introduction,computermethods:range} for details.  We discuss the existence of such oracles for \Dfinite\ functions in \S\ref{sec:Dfinite}.

\subsection{The Krawczyk Method}\label{sec:KrawczykMethoddetails}
The Krawczyk operator combines both interval arithmetic and a generalization of the Newton operator in order to develop a certified test for an isolated root of a square system of equations in a region.  The Krawczyk operator is one member of the family of interval-based Newton-type methods, see, e.g., \cite[Chapter 8]{moore2009introduction} and the references included therein for more details.  In most presentations of the Krawczyk operator, see, e.g., \cite{krawczyk1969newton,moore2009introduction}, the operator is only described for real variables.  There are some subtle differences that arise in the complex setting; therefore, in this section, we provide the theory for the Krawczyk operator for complex variables.

Suppose that, for an open set $U\subset\CC^n$, $F:U\rightarrow\mathbb{C}^n$ is a square differentiable system of functions and let $Y\in GL_n$ the set of $n\times n$ invertible matrices.  We observe that $F(x)=0$ if and only if $x$ is a fixed point of $G(x):=x-YF(x)$.  We note that if $Y$ were replaced by $F'(x)^{-1}$, then this function would be the Newton operator.  The correspondence between the fixed points of $G$ and the roots of $F$ is the motivation for the Krawczyk operator:
\begin{defi}
Let $U\subset\CC^n$ be an open set and $F:U\rightarrow\mathbb{C}^n$ be a square differentiable system of functions such that $F'$ has an interval extension $\square F'$.  Let $y\in I\in\mathbb{I}U$ and $Y\in GL_n$.  The {\em Krawczyk operator} centered at $y$ is defined to be
$$
K_y(I):=y-YF(y)+(I_n-Y\square F'(I))(I-y),
$$
where $I_n$ is the $n$-dimensional identity matrix.
\end{defi}
When the domain and codomain are real, the Krawczyk operator is an interval extension of the function $G$ using the {\em mean value form}, see, e.g., \cite[Chapter 6]{moore2009introduction}.  In the complex case, however, there is no mean value theorem, but with the definition of complex multiplication for intervals from Equation (\ref{eq:complex:mult}), the Krawczyk operator remains an interval extension of the function $G$.
\begin{lem}\label{lem:inclusion}
Let $U\subset\CC^n$ be an open set and $F:U\rightarrow\mathbb{C}^n$ be a square differentiable system of functions such that $F'$ has an interval extension $\square F'$.  Let $y\in I\in\mathbb{I}U$ and $Y\in GL_n$.  Then,
$$
G(I)\subseteq K_y(I).
$$
\end{lem}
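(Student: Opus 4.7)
The plan is to fix an arbitrary $x \in I$ and show $G(x) \in K_y(I)$. Starting from the algebraic identity
\[
G(x) = G(y) + \bigl(G(x)-G(y)\bigr) = y - YF(y) + (x-y) - Y\bigl(F(x)-F(y)\bigr),
\]
the task reduces to rewriting $F(x)-F(y)$ in the form $M(x,y)\,(x-y)$ for some matrix $M(x,y)$ that I can show lies in $\square F'(I)$ componentwise.

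Since the real mean value theorem is not available here, I would replace it with the holomorphic line-integral representation
\[
F(x)-F(y) \;=\; \int_0^1 F'\bigl(y + t(x-y)\bigr)(x-y)\,dt \;=\; \left(\int_0^1 F'\bigl(y + t(x-y)\bigr)\,dt\right)(x-y),
\]
which is valid because $I\in \mathbb{I}U$ is a convex box containing the entire segment $\{y+t(x-y):t\in[0,1]\}$ and $F$ is holomorphic on $U$. Denote the matrix in parentheses by $M(x,y)$.

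The next step is to verify that $M(x,y)\in \square F'(I)$ entry by entry. Each entry of $\square F'(I)$ is by construction a complex interval of the form $[\alpha_1,\alpha_2]+[\beta_1,\beta_2]i$, which is a convex subset of $\CC$. For every $t\in[0,1]$, the value $F'(y+t(x-y))$ lies entrywise in $F'(I)\subseteq \square F'(I)$, and since $M(x,y)$ is a limit of convex combinations of such values, the convexity of each component interval places $M(x,y)$ entrywise in $\square F'(I)$.

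With this in hand, substituting back gives
\[
G(x) \;=\; y - YF(y) + \bigl(I_n - YM(x,y)\bigr)(x-y),
\]
and the inclusion property of complex interval arithmetic (using the multiplication rule in Equation~(\ref{eq:complex:mult})) yields $I_n - YM(x,y) \in I_n - Y\,\square F'(I)$ and $x-y \in I - y$, hence $(I_n - YM(x,y))(x-y) \in (I_n - Y\,\square F'(I))(I-y)$. Therefore $G(x)\in K_y(I)$, completing the proof. The only delicate point is the substitution of the complex line-integral formula for the missing mean value theorem; once that is in place, the convexity of complex intervals together with the (over-)estimating nature of the interval product in (\ref{eq:complex:mult}) makes the inclusion automatic.
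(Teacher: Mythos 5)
Your proof is correct, but it takes a genuinely different route from the paper. The paper has no single ``complex mean value'' object: it splits the step from $y$ to $z$ into a real segment and an imaginary segment, applies the real mean value theorem separately to the real and imaginary parts of each $G_j$, and then uses the Cauchy--Riemann equations; this produces \emph{four different} intermediate points $c_1,\dots,c_4$, so the real part of $G_j'$ and the imaginary part of $G_j'$ that appear in the resulting formula come from different points of $I$. Consequently the paper's argument leans essentially on the specific rectangular multiplication rule of Equation~(\ref{eq:complex:mult}), whose four interval products treat real and imaginary components independently (this is exactly why the paper stresses that this over-estimating formulation is ``critically important''). You instead use the line-integral (Hadamard-type) representation $F(x)-F(y)=\bigl(\int_0^1 F'(y+t(x-y))\,dt\bigr)(x-y)$, valid since the box $I$ is convex and $F$ is complex differentiable (hence $F'$ is continuous), and place the single averaged matrix $M(x,y)$ entrywise in $\square F'(I)$ because each entry of $\square F'(I)$ is a closed convex subset of $\CC$ containing the corresponding entries of $F'$ on the segment. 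After that you only need the basic enclosure property of the interval operations (exact products of members lie in the interval product), not the componentwise structure of (\ref{eq:complex:mult}) per se; so your argument is arguably cleaner and would survive with any complex interval arithmetic satisfying the enclosure property, while the paper's argument is tailored to rectangular complex intervals but avoids integration, using only the elementary real mean value theorem plus Cauchy--Riemann. The only points to make explicit are the ones you gesture at: the entrywise meaning of $F'(I)\subseteq\square F'(I)$, and that each interval entry is \emph{closed} as well as convex, so that the integral (a limit of convex combinations of values of $F'_{jk}$ on the segment) indeed lies in it.
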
 
\begin{proof}
We observe that $K_y(I)=G(y)+(I_n-Y\square F'(I))(I-y)$, so it is enough to show that for any $z\in I$, $G(z)-G(y)\in(I_n-Y\square F'(I))(I-y)$.  Let $w=\Re(z)+i\Im(y)$; we note that $w\in I$ since $I$ is a rectangle.  Then, we consider the real path form $y$ to $w$ and the purely imaginary path from $w$ to $z$.  Considering these two paths as functions of a real variable, we use the mean value theorem on each path and on the real and imaginary parts of $G$ separately.  Fix $1\leq j\leq n$.  After applying the Cauchy-Riemann equations, each of $G_j(w)-G_j(y)$ and $G_j(z)-G_j(w)$ can be written in terms of the real and imaginary parts of $G_j'$ at appropriate points times $(w-y)$ or $(z-w)$.  Then, the sum of these two formulae correspond to elements of the four products appearing in Equation (\ref{eq:complex:mult}).  By repeating this computation for each $j$, we conclude that $G(z)\in K_y(I)$.
\ifthenelse{\mainfilecheck{1}>0}{}{We begin by observing that $I_n-Y\square F'(I)$ is an interval matrix containing $G'(I)$.  Our plan, for a fixed $z\in I$, is to write $G(z)-G(y)$ in terms of elements of $G'(I)$, $\Re(z-y)$, and $\Im(z-y)$ in order to conclude the desired containment.

Let $w=\Re(z)+i\Im(y)$, and consider the path from $y$ to $w$, which is a real path, followed by the path from $w$ to $z$, which is purely imaginary path.  Fix $1\leq j\leq n$.  By the real mean value theorem, there are some $c_1$ and $c_2$ along the line between $y$ and $w$ so that $\nabla_{\Re}(\Re G_j(c_1))\cdot(w-y)=\Re G_j(w)-\Re G_j(y)$ and $\nabla_{\Re}(\Im G_j(c_2))\cdot(w-y)=\Im G_j(w)-\Im G_j(y)$.  Here, the subscript indicates that the derivative is only being taken with respect to the real variable.  Similarly, along the line between $w$ and $z$, there are some $c_3$ and $c_4$ so that $\nabla_{\Im}(\Re G_j(c_3))\cdot \Im(z-w)=\Re G_j(z)-\Re G_j(w)$ and $\nabla_{\Im}(\Im G_j(c_4))\cdot \Im(z-w)=\Im G_j(z)-\Im G_j(w)$, where the derivative is being taken with respect to the complex variable.  Putting these together (and multiplying by $i$ as appropriate), we get
\begin{align*}
G_j(w)&=G_j(y)+\nabla_{\Re}(\Re G_j(c_1))\cdot(w-y)+i\nabla_{\Re}(\Im G_j(c_2))\cdot(w-y)\\
G_j(z)&=G_j(w)-i\nabla_{\Im}(\Re G_j(c_3))\cdot (z-w)+\nabla_{\Im}(\Im G_j(c_4))\cdot (z-w).
\end{align*}
Using the Cauchy-Riemann equations, we find that
\begin{align*}
G_j(w)&=G_j(y)+\Re G_j'(c_1)\cdot(w-y)+i\Im G_j'(c_2)\cdot(w-y)\\
G_j(z)&=G_j(w)+i\Im G_j'(c_3)\cdot (z-w)+\Re G_j'(c_4)\cdot (z-w).
\end{align*}
Therefore,
\ifthenelse{\mainfilecheck{1} > 0}{\begin{multline}\label{eq:formulaz}
	G_j(z)=G_j(y)+\Re G_j'(c_1)\cdot(w-y)+i\Im G_j'(c_3)\cdot (z-w)\\+\Re G_j'(c_4)\cdot (z-w)+i\Im G_j'(c_2)\cdot(w-y).
	\end{multline}
}{\begin{equation}\label{eq:formulaz}
G_j(z)=G_j(y)+\Re G_j'(c_1)\cdot(w-y)+i\Im G_j'(c_3)\cdot (z-w)+\Re G_j'(c_4)\cdot (z-w)+i\Im G_j'(c_2)\cdot(w-y).
\end{equation}
}

Finally, we observe that since each $c_i$ is in $I$, the real and imaginary parts of $G'_j(c_i)$ are in the $j^{\text{th}}$ row of $I_n-Y\square F'(I)$.  In addition, since $w-y=\Re(z-y)$ and $z-w=i\Im(z-y)$, it follows that the differences $w-y$ and $z-w$ are also in the corresponding real and imaginary parts of $I-y$.  Finally, the four products appearing in Equation (\ref{eq:formulaz}) correspond to elements of the four products appearing in Equation (\ref{eq:complex:mult}).  By repeating this for each $1\leq j\leq n$, independently, we conclude that $G(z)\in K_y(I)$ and the desired inclusion holds.}{}
\end{proof}

In the following theorem, we collect a few facts about detecting the existence and uniqueness of roots using the Krawczyk operator. We include the proof for completeness.

\begin{thm}[cf \cite{krawczyk1969newton}]\label{thm:krawczyk}
Let $U\subset\CC^n$ be an open set and $F:U\rightarrow\mathbb{C}^n$ be a square differentiable system of functions such that $F'$ has an oracle interval extension $\square F'$.  Let $y\in I\in\mathbb{I}U$ and $Y\in GL_n$.  The following hold:
\begin{enumerate}
\item If $x\in I$ is a root of $F$, then $x\in K_y(I)$,
\item If $K_y(I)\subset I$, then there is a root of $F$ in $I$, and
\item If $I$ contains a root of $F$ and $\sqrt{2}\|I_n-Y\square F'(I)\|<1$, then the root in $I$ is unique.  Here, $\|I_n-Y\square F'(I)\|$ denotes the maximum operator norm of a matrix in $I_n-Y\square F'(I)$ under the max-norm.
\end{enumerate}
\end{thm}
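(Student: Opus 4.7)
The plan is to handle the three assertions separately, each time leveraging Lemma \ref{lem:inclusion}. For part (1), since $Y \in GL_n$, a point $x$ is a root of $F$ if and only if $G(x) = x - YF(x) = x$; hence $x = G(x) \in G(I) \subseteq K_y(I)$ by Lemma \ref{lem:inclusion}, which is exactly the claim. For part (2), the hypothesis $K_y(I) \subset I$ combined with Lemma \ref{lem:inclusion} gives $G(I) \subseteq I$. Viewing $I$ as a compact convex subset of $\RR^{2n}$ and noting that $G$ is continuous, Brouwer's fixed point theorem produces some $x^* \in I$ with $G(x^*) = x^*$, and the invertibility of $Y$ then forces $F(x^*) = 0$.

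For part (3), my plan is to show that $G$ is a contraction on $I$ in the max-norm, which immediately makes any root unique. Given roots $x_1, x_2 \in I$, I would apply the broken real-then-imaginary path argument from the proof of Lemma \ref{lem:inclusion} to the difference $G(x_2) - G(x_1)$ in place of $G(z) - G(y)$. This expresses each component $G_j(x_2) - G_j(x_1)$ as an instance of the complex interval product of Equation (\ref{eq:complex:mult}), with entries of $I_n - Y\square F'(I)$ acting on the real and imaginary parts of $x_2 - x_1$. The key quantitative input is that this interval product overestimates the true complex product by at most a factor of $\sqrt{2}$ in the max-norm, yielding
$$\|x_2 - x_1\|_\infty = \|G(x_2) - G(x_1)\|_\infty \leq \sqrt{2}\,\|I_n - Y\square F'(I)\|\cdot\|x_2 - x_1\|_\infty.$$
Under the hypothesis $\sqrt{2}\|I_n - Y\square F'(I)\| < 1$, this forces $x_1 = x_2$.

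The main obstacle is justifying the $\sqrt{2}$ factor cleanly. It is not a slack constant but a genuine feature of the complex setting: absent a complex mean value theorem, the argument must treat a real step and a purely imaginary step separately, and the componentwise inequality $|a + bi| \leq \sqrt{2}\max(|a|,|b|)$ is precisely what couples the real and imaginary contributions into a single max-norm bound with constant $\sqrt{2}$. Care is also required to verify that the intermediate matrices produced by the path argument actually lie in $I_n - Y\square F'(I)$; this uses that $I$ is a box, so the broken path stays inside $I$, together with the Cauchy--Riemann equations exactly as in the proof of Lemma \ref{lem:inclusion}.
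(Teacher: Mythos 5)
Your proposal is correct and follows essentially the same route as the paper: parts (1) and (2) are the fixed-point/interval-extension and Brouwer arguments verbatim, and part (3) establishes that $G$ is contractive on $I$ by expanding the broken real-then-imaginary path argument of Lemma \ref{lem:inclusion}, exactly as the paper does. The one caveat is your bookkeeping of the $\sqrt{2}$: before invoking $|a+bi|\le\sqrt{2}\max(|a|,|b|)$ on each component of $G(x_2)-G(x_1)$, you must bound its real and imaginary parts by $\|I_n-Y\square F'(I)\|\,\|x_2-x_1\|_\infty$ \emph{without} losing another constant, which requires a per-coordinate Cauchy--Schwarz estimate combined with the entrywise rectangle (``mix-and-match'') structure of the interval matrix --- this is what the paper's terse Cauchy--Schwarz step amounts to, and a cruder row-sum estimate at that point would leak an extra $\sqrt{2}$ and yield only the factor $2$, which is insufficient under the stated hypothesis.
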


\begin{proof} \begin{inparaenum}[(1)]
\item Since $x$ is a fixed point of the function $G$ if and only if $x$ is a root of $F$, by the properties of interval extensions, if $x\in I$ is a root of $F$, then $G(x)=x$ is in $K_y(I)$.
\item If $K_y(I)\subset I$, then the image of the function $G$ on $I$ is a subset of $I$, so, by Brouwer's fixed point theorem, $G$ has a fixed point, i.e., a root of $F$.
\item We observe that by expanding the proof of Lemma \ref{lem:inclusion}, we find that for all $z_1,z_2\in I$, 
$$
G(z_1)-G(z_2)\in\square G'(I)\cdot\Re(z_1-z_2)+\square G'(I)\cdot\Im(z_1-z_2).
$$
Thus,
\ifthenelse{\mainfilecheck{1} > 0}{\begin{multline}
	\|G_1(z_1)-G(z_2)\|_\infty\leq \|I_n-Y\square F'(I)\|\|\Re(z_1-z_2)\|_\infty\\+\|I_n-Y\square F'(I)\|\|\Im(z_1-z_2)\|_\infty.
	\end{multline}
}{$$
\|G_1(z_1)-G(z_2)\|_\infty\leq \|I_n-Y\square F'(I)\|\|\Re(z_1-z_2)\|_\infty+\|I_n-Y\square F'(I)\|\|\Im(z_1-z_2)\|_\infty.
$$
}
The Cauchy-Schwartz inequality and the assumption imply that 
$$\|G_1(z_1)-G(z_2)\|_\infty\leq\sqrt{2}\|I_n-Y\square F'(I)\|\|z_1-z_2\|_\infty<\|z_1-z_2\|_\infty,$$
and we conclude that the $G$ function is contractive within $I$.
\end{inparaenum}\end{proof} 

\begin{rmk}
The results of Theorem~\ref{thm:krawczyk} apply when $\CC$ is replaced by $\RR$.  In fact, in the case of $\RR$, the uniqueness test simplifies to $\|I_n-Y\square F'(I)\|<1,$ i.e., without the $\sqrt{2}$ factor.
\end{rmk}

Theorem~\ref{thm:krawczyk} serves as a proof of correctness of the following algorithm.

\begin{algorithm} {\textbf{KrawczykTest$(F,I,Y,y,\square F')$:}}
\renewcommand{\algorithmicrequire}{\textbf{Input:}}
\renewcommand{\algorithmicensure}{\textbf{Output:}}
\begin{algorithmic}
  \REQUIRE{A square differentiable system of functions $F:U\rightarrow \CC^n$ for an open set $U\subset \CC^n$, an interval $I\in \mathbb{I}U$, an invertible matrix $Y\in GL_n$, a point $y\in I$ and an interval extension $\square F'$.}
  \ENSURE{The boolean value of a condition that implies that ``the interval $I$ contains a unique nonsingular root $x$ of $F$''.}
  \smallskip
  \hrule
  \smallskip
  \RETURN $K_y(I)\subset I$ \AND $\sqrt{2}\|I_n-Y\square F'(I)\|< 1$
\end{algorithmic}
\end{algorithm}

In practice, the preconditioning matrix $Y$ is chosen to make $\|I_n-Y\square F'(I)\|$ as small as possible.  Without additional information, a good choice is often an approximation to $F'(m(I))^{-1}$, provided it exists, along with $y=m(I)$, i.e., the midpoint of $I$.  

We also observe that it might not be possible to evaluate $F(y)$ exactly.  Therefore, we consider a generalization of the Krawczyk operator.  Suppose that there is an oracle $\square F$ which, on input $y\in\CC^n$, returns an interval $\square F(y)$ containing $F(y)$.  Then, we may replace $F(y)$ by $\square F(y)$ in the definition of the Krawczyk operator as follows:
\begin{equation}\label{eq:krawczykinterval}
\square K_{y}(I)=y-Y\square F(y)+(I_n-Y\square F'(I))(I-y).
\end{equation}
We observe that $K_y(I)\subset \square K_{y}(I)$.  Therefore, when the corresponding existence and uniqueness results hold for $\square K_y(I)$, they also hold for $K_y(I)$.  By combining this operator with Theorem \ref{thm:krawczyk}, we arrive at a certified test for the Krawczyk operator.  In particular, checking that both $\square K_{y}(I)\subset I$ and $\sqrt{2}\|I_n-Y\square F'(I)\|<1$ hold, we certify that $I$ contains a unique root of $F$.  In this case, any point of $I$ approximates the root of $F$ in $I$.



\section{Certification using $\alpha$-theory}\label{sec:alpha-theory}

In this section, we introduce an effective extension of $\alpha$-theory for analytic functions.  We explicitly describe the oracles necessary for the theory to be developed into an algorithm.

\subsection{Smale's \texorpdfstring{$\bm{\alpha}$}{alpha}-theory}

In this section, we recall Smale's $\alpha$-theory, which is used to certify the solutions of square systems of analytic functions.  Let $F:U\rightarrow\CC^n$ be a square system of analytic functions defined on open set $U\subset \CC^n$. 
\emph{Quadratic convergence} of $\{N_F^k(x)\}$ to a solution of $F$ is defined as follows:
\begin{defi}\label{def:approxSolution}
  A point $x\in \CC^n$ is called \textit{an approximate solution} to $F$ with \textit{associated solution} $x^*$ with $F(x^\ast)=0$ if for every $k\in \mathbb{N}$,
  \[\left\|N_F^k(x)-x^*\right\|\leq \left(\frac{1}{2}\right)^{2^k-1}\|x-x^*\|.\]
Moreover, if $F'(x)$ is not invertible, then $x$ is an approximate solution if and only if $F(x)=0$. 
\end{defi}
$\alpha$-theory provides a certificate for a point $x$ to be an approximate solution to $F$ using three values: $\alpha(F,x),\beta(F,x)$ and $\gamma(F,x)$. If $F'(x)$ is invertible, we define
\begin{equation*}
	\begin{array}{ccl}
	\alpha(F,x)& := & \beta(F,x)\gamma(F,x)\\
	\beta(F,x) & := & \|x-N_F(x)\|=\|F'(x)^{-1}F(x)\|\\
	\gamma(F,x) & := & \sup\limits_{k\geq 2}\left\|\frac{F'(x)^{-1}F^{(k)}(x)}{k!}\right\|^{\frac{1}{k-1}}
	\end{array}
\end{equation*}
where $F^{(k)}(x)$ in the definition of $\gamma(F,x)$ is a symmetric tensor whose components are the $k$-th partial derivatives of $F$, see \cite[Chapter 5]{MR783635}. The norm in $\beta(F,x)$ is the usual Euclidean norm and the norm in $\gamma(F,x)$ is the operator norm on $S^k\mathbb{C}^n$ (for details, see \cite{hauenstein2011alphacertified}).  When $F'$ is not invertible at $x$, we define $\alpha(F,x)=\beta(F,x)=\gamma(F,x)=\infty$, but we do not consider this case in this paper.  The following theorem is the main theorem of $\alpha$-theory:
\begin{thm}(\cite[Theorem 2]{hauenstein2012algorithm}) \label{thm:alphatheory}
Let $F:\CC^n\rightarrow \CC^n$ be a system of analytic functions, and let $x$ be any point in $\CC^n$.  If 
  \[\alpha(F,x)<\frac{13-3\sqrt{17}}{4},\]
  then $x$ is an approximate solution for $F$. Moreover, $\|x-x^*\|\leq 2\beta(F,x)$ where $x^*$ is the associated solution to $x$.
\end{thm}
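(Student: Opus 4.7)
The plan is to follow Smale's classical approach, reducing the theorem to quantitative control of how the invariants $\beta(F, \cdot)$ and $\gamma(F, \cdot)$ evolve under a single Newton step. The central object is the polynomial $\psi(u) = 1 - 4u + 2u^{2}$, whose smallest positive root is $u_{*} = 1 - \tfrac{1}{\sqrt{2}}$. Using only the definition of $\gamma(F, x)$ together with the geometric identity $\sum_{k \geq 2} k u^{k-1} = (2u - u^{2})/(1-u)^{2}$, one obtains via a Neumann-series argument the following shift estimates: for any $y$ with $u := \|y - x\|\,\gamma(F,x) < u_{*}$, the derivative $F'(y)$ is invertible and
\[
\|F'(y)^{-1} F'(x)\| \leq \frac{(1-u)^{2}}{\psi(u)}, \qquad \gamma(F, y) \leq \frac{\gamma(F, x)}{(1-u)\,\psi(u)}.
\]

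Specializing to $y = N_F(x)$, so that $\|y - x\| = \beta(F, x)$ and $u = \alpha(F, x)$, I would combine these with the Taylor identity $F(y) = \sum_{k \geq 2} F^{(k)}(x)(y-x)^{k}/k!$, valid because $F(x) + F'(x)(y - x) = 0$ by definition of the Newton step. Bounding the remainder by $\gamma$ and pre-multiplying by $F'(y)^{-1}$ yields an estimate of the form $\beta(F, N_F(x)) \leq \eta(u)\,\beta(F, x)$ for an explicit rational function $\eta(u)$, and composing with the $\gamma$-shift produces a recurrence $\alpha(F, N_F(x)) \leq \Phi(u)$ with $\Phi(u) = O(u^{2})$ near the origin. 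Writing $x_{k} := N_F^{k}(x)$ and $u_{k} := \alpha(F, x_{k})$, the iterated estimates become $u_{k+1} \leq \Phi(u_{k})$ and $\beta(F, x_{k+1}) \leq \eta(u_{k})\,\beta(F, x_{k})$.

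The threshold $\alpha_{0} = \tfrac{13 - 3\sqrt{17}}{4}$ is pinned down as the sharp constant required for a simultaneous balance: it must be small enough that (i) $\alpha_{0} < u_{*}$, (ii) $\Phi$ sends $[0, \alpha_{0}]$ into itself so the recurrence preserves $u_{k} < \alpha_{0}$ indefinitely, and (iii) the compounded factors $\prod_{j < k}\eta(u_{j})$ telescope to give $\sum_{k \geq 0}\beta(F, x_{k}) \leq 2\beta(F, x)$. After clearing denominators, the binding inequality reduces to the quadratic $2u^{2} - 13u + 2 = 0$, whose smaller positive root is exactly $\tfrac{13 - 3\sqrt{17}}{4}$. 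Under the hypothesis $\alpha(F, x) < \alpha_{0}$, this makes $\{x_{k}\}$ Cauchy with limit $x^{*}$ satisfying $F(x^{*}) = 0$, and the geometric sum immediately gives the distance bound $\|x - x^{*}\| \leq 2\beta(F, x)$ claimed in the theorem.

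To upgrade geometric decay to the quadratic bound $\|x_{k} - x^{*}\| \leq (1/2)^{2^{k} - 1}\|x - x^{*}\|$ from Definition \ref{def:approxSolution}, I would iterate the quadratic recursion $u_{k+1} \leq \Phi(u_{k}) = O(u_{k}^{2})$ to conclude $u_{k} \leq C^{2^{k} - 1}$ for a constant $C < 1$, feed this back into $\eta(u) = O(u)$, and apply the triangle inequality between $x_{k}$ and $x^{*}$ through the telescoping tail $\sum_{j \geq k}\beta(F, x_{j})$, from which the exponent $2^{k} - 1$ emerges by compounding. The main obstacle throughout is the algebraic bookkeeping around $\alpha_{0}$: each of the conditions (i)--(iii) is an inequality between explicit rational functions of $u$ whose critical values are roots of low-degree polynomials, and verifying that they are jointly sharp precisely at $\tfrac{13 - 3\sqrt{17}}{4}$ is where the delicate arithmetic concentrates.
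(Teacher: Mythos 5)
You are sketching a proof of a theorem the paper itself does not prove: it is quoted from \cite[Theorem 2]{hauenstein2012algorithm}, which in turn rests on Smale's point-estimate theorem (Blum--Cucker--Shub--Smale, Ch.~8). Measured against that classical argument, your skeleton is the right one and the quantitative lemmas you invoke are correctly stated: $\psi(u)=1-4u+2u^2$ with smallest positive root $1-\tfrac{1}{\sqrt2}$, the Neumann-series bound $\|F'(y)^{-1}F'(x)\|\le (1-u)^2/\psi(u)$ via $\sum_{k\ge2}ku^{k-1}=(2u-u^2)/(1-u)^2$, the $\gamma$-shift $\gamma(F,y)\le \gamma(F,x)/((1-u)\psi(u))$, and the Taylor identity at the Newton iterate, which give $\beta(F,N_F(x))\le \tfrac{u(1-u)}{\psi(u)}\beta(F,x)$ and $\alpha(F,N_F(x))\le u^2/\psi(u)^2$ with $u=\alpha(F,x)$.

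The genuine gap is in how you claim the constant is pinned down. Your conditions (i)--(iii) are \emph{not} jointly sharp at $\alpha_0=\tfrac{13-3\sqrt{17}}{4}\approx 0.1577$ and do not reduce to $2u^2-13u+2=0$: condition (iii), via $\eta(u)=u(1-u)/\psi(u)\le\tfrac12$, clears denominators to $4u^2-6u+1\ge 0$, i.e.\ $u\le\tfrac{3-\sqrt5}{4}\approx 0.191$; condition (ii), $\Phi(u)\le u$, is $u\le\psi(u)^2$, which binds near $0.163$; condition (i) binds at $1-\tfrac1{\sqrt2}\approx0.293$. All three hold with room to spare at $\alpha_0$, so they cannot be what produces it. What actually forces $\alpha_0$ is the part you defer to the final paragraph and treat loosely: Definition~\ref{def:approxSolution} demands the factor $\left(\tfrac12\right)^{2^k-1}$ measured against $\|x-x^*\|$, and deducing $u_k\le C^{2^k-1}$ for \emph{some} $C<1$ is strictly weaker than this; obtaining the specific ratio $\tfrac12$ relative to the distance to the associated zero is precisely where $\tfrac{13-3\sqrt{17}}{4}$ (rather than, say, $3-2\sqrt2$ or $\tfrac{3-\sqrt5}{4}$) is needed, and your sketch offers no derivation of it. The classical way to close this is to majorize the vector iteration by Newton's method applied to the scalar function $h(t)=\beta-t+\gamma t^2/(1-\gamma t)$, whose roots and Newton iterates are available in closed form, and to impose the approximate-zero inequality on that scalar sequence; that bookkeeping is what yields the quadratic threshold. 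As written, your argument establishes convergence of the iterates and the bound $\|x-x^*\|\le 2\beta(F,x)$ for some positive threshold on $\alpha$, but not the theorem with the stated constant.
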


Moreover, with a stricter test, $\alpha$-theory also provides a way to identify when other points approximate the same root of $F$.  This is expressed in the following theorem:
\begin{thm}[{\cite[Theorem 4 and Remark 6, Chapter 8]{blum2012complexity}}]\label{thm:gammatheory}
Let $F:\CC^n\rightarrow\CC^n$ be a system of analytic functions, and let $x$ be any point in $\CC^n$.  If
$$\alpha(F,x)<0.03\quad\text{and}\quad\|x-y\|<\frac{1}{20\gamma(F,x)},$$ then $x$ and $y$ are approximate solutions to the same root of $F$. Also, there is a unique root $x^*$ of $F$ in the ball centered at $x$ with radius $\frac{1}{20\gamma(F,x)}$. Furthermore, if $\|x-\overline{x}\|>4\beta(F,x)$, then $x^*$ is not real.
\end{thm}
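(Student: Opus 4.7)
The plan is to combine Theorem~\ref{thm:alphatheory} with the standard perturbation estimates of $\gamma$-theory. First, since $0.03 < (13-3\sqrt{17})/4$, Theorem~\ref{thm:alphatheory} applied at $x$ supplies an associated root $x^*$ of $F$ with $\|x-x^*\| \leq 2\beta(F,x)$; this $x^*$ will be the unique root promised in the conclusion. It then remains to show, in order, that (a) $y$ itself is an approximate solution, (b) the associated root of $y$ equals $x^*$, and (c) no second root lies in the ball $B(x, 1/(20\gamma(F,x)))$.

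For (a), set $u := \gamma(F,x)\|x-y\| < 1/20$. The classical $\gamma$-theoretic perturbation inequalities (see \cite[Chapter 8]{blum2012complexity}) express $\gamma(F,y)$ and $\beta(F,y)$ in terms of $\gamma(F,x)$, $\beta(F,x)$, $\|x-y\|$, and the function $\psi(u) = 1 - 4u + 2u^2$, roughly of the form
\[
\gamma(F,y) \leq \frac{\gamma(F,x)}{(1-u)\psi(u)}, \qquad \beta(F,y) \leq \frac{(1-u)\beta(F,x) + \|x-y\|}{\psi(u)}.
\]
For $u < 1/20$ both $(1-u)$ and $\psi(u)$ are comfortably bounded away from zero, so $\alpha(F,y) = \beta(F,y)\gamma(F,y)$ inherits an explicit bound from $\alpha(F,x) < 0.03$ together with the small contribution from $u$. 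A direct numerical check confirms this bound lies below $(13-3\sqrt{17})/4$, and Theorem~\ref{thm:alphatheory} then gives an associated root $y^*$ for $y$ with $\|y-y^*\|\leq 2\beta(F,y)$. For (b), both $x^*$ and $y^*$ lie in a small explicit ball around $x$ whose radius, when multiplied by $\gamma(F,x^*)$ (itself bounded via the perturbation estimates), falls within Smale's uniqueness radius for roots; consequently $y^* = x^*$. Claim (c) is the same uniqueness argument applied to any hypothetical root $z^* \in B(x, 1/(20\gamma(F,x)))$: $z^*$ is trivially an approximate solution of itself, so by (b) one must have $z^* = x^*$.

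The reality claim is a contrapositive argument that uses only the already-established estimate $\|x-x^*\| \leq 2\beta(F,x)$: assuming $F$ has real coefficients so that $F(\overline{x}) = \overline{F(x)}$, one has $\beta(F,\overline{x}) = \beta(F,x)$ and the associated root of $\overline{x}$ is $\overline{x^*}$; if $x^*$ were real then $\overline{x^*}=x^*$, so $\|\overline{x}-x^*\| \leq 2\beta(F,x)$, and the triangle inequality yields $\|x-\overline{x}\| \leq 4\beta(F,x)$, contradicting the hypothesis. The main obstacle in the plan is step (a): the constants $0.03$ and $1/20$ are delicately tuned so that the transferred $\alpha(F,y)$ lands below the Smale threshold, and verifying this requires a careful quantitative manipulation of $\psi(u)$ and $(1-u)$ at $u=1/20$ rather than anything structurally difficult.
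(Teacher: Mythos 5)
The paper itself does not prove Theorem~\ref{thm:gammatheory}: it is imported directly from \cite[Chapter 8, Theorem 4 and Remark 6]{blum2012complexity}, so there is no internal proof to compare yours against. Your outline follows the same route as the cited source---transfer $\beta$ and $\gamma$ from $x$ to $y$ via $u=\gamma(F,x)\|x-y\|$ and $\psi(u)=1-4u+2u^2$, apply Theorem~\ref{thm:alphatheory} at $y$, and then identify associated roots through a uniqueness/separation radius around a zero---so the architecture is the right one, and the numbers do have enough slack for it to go through (with your stated bounds, $\psi(0.05)=0.805$, $\gamma(F,y)\le\gamma(F,x)/0.765$ and $\beta(F,y)\le 0.0785/(0.805\,\gamma(F,x))$, giving $\alpha(F,y)\approx 0.13<\frac{13-3\sqrt{17}}{4}\approx 0.158$).

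As a proof, however, the proposal is only a sketch precisely where the content lies: the perturbation inequalities are stated ``roughly,'' the decisive numerical verification is deferred to ``a direct numerical check,'' and step (b) leans on an unstated uniqueness-radius result for zeros; these are exactly the lemmas whose precise constants make $0.03$ and $1/20$ legitimate, so they would have to be quoted exactly (or proved) for the argument to stand on its own---which is presumably why the paper cites rather than reproves. One concrete correction: the reality claim does not need your added hypothesis that $F$ is conjugation-equivariant. If $x^*\in\RR^n$, then since conjugation preserves the norm, $\|x-\overline{x}\|\le\|x-x^*\|+\|\overline{x^*}-\overline{x}\|=2\|x-x^*\|\le 4\beta(F,x)$, contradicting $\|x-\overline{x}\|>4\beta(F,x)$; your detour through $\beta(F,\overline{x})$ and the associated root of $\overline{x}$ silently assumes $F(\overline{x})=\overline{F(x)}$, which is not part of the statement.
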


\begin{rmk}
The results of Theorems \ref{thm:alphatheory} and \ref{thm:gammatheory} apply when $\CC$ is replaced by $\RR$.  In particular, if $x$ is real and both $F$ and $F'$ are real-valued over $\RR$, then, when the hypotheses of these theorems are satisfied, the corresponding root of $F$ is real. 
\end{rmk}

We observe that in many cases, $\beta$ can be explicitly computed or bounded.  For example, suppose there are oracles $\square F(x)$ and $\square F'(x)$ that return intervals or boxes containing $F(x)$ and $F'(x)$.  Then, $\beta(F,x)$ can be estimated by bounding $\square F'(x)^{-1}\square F(x)$.  In Section \ref{sec:Dfinite}, we show that such oracles exist for \Dfinite\ functions.  Therefore, throughout the remainder of this section, we focus on bounding the value of $\gamma(F,x)$.

\subsection{Bounds on $\gamma$ for polynomial systems}\label{sec:alpha:polynomial}

In most applications of $\alpha$-theory the key step is to compute (or bound) $\gamma$.  In this section, we recall the construction in \cite[Section I-3]{shub2000complexity} for the case where $F=P$ is a square polynomial system, i.e., $m=0$ in Equation (\ref{eq:SystemWithIngredients}).  These bounds are needed for the polynomial part for the general case of Equation (\ref{eq:SystemWithIngredients}).

For a polynomial $p=\sum_{|\nu|\leq d}a_\nu x^\nu$, we recall that the Bombieri-Weyl norm is defined as 
\begin{equation*}
    \|p\|^2=\frac{1}{d!}\sum\limits_{|\nu|\leq d}\nu!(d-|\nu|)! |a_\nu|^2.
  \end{equation*}
For a system of polynomials $P=(p_1,\dots,p_n)$, we define 
$$
\|P\|^2=\sum_{i=1}^n\|p_i\|^2.
$$
Moreover, we let $d_i=\deg p_i$ be the degree of the $i^{\text{th}}$ polynomial and $d=\max d_i$ be the maximum degree of the polynomials.  For a point $x\in\CC$, we denote $1+\sum_{i=1}^n|x_i|^2$ by $\|(1,x)\|^2$, and we let $\Delta_P(x)$ be the diagonal matrix with entries 
$$
\Delta_P(x)_{ii}:=\sqrt{d_i} \pointnorm{x}^{d_i-1}.
$$
With these definitions in hand, we may use them to bound $\gamma$ for a polynomial system as follows:

\begin{prop}[{\cite[Proposition 5]{hauenstein2012algorithm}}]
Let $P$ be a square system of polynomials and suppose that $P'(x)$ is nonsingular at $x\in\CC^n$.  Define
$$
\mu(P,x):=\max\left\{1,\|P\|\|P'(x)^{-1}\Delta_P(x)\|\right\}
$$
where the norm in $\|P'(x)^{-1}\Delta_P(x)\|$ is the operator norm. Then, 
$$
\gamma(P,x)\leq\frac{\mu(P,x)d^{\frac{3}{2}}}{2\|(1,x)\|}.
$$
\end{prop}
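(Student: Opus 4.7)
The plan is to follow the classical Shub--Smale factorization approach and reduce the bound on $\gamma(P,x)$ to a pointwise bound on the higher derivatives of each coordinate polynomial, renormalized by $\Delta_P(x)$.

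First I would start from the definition
$$\gamma(P,x) = \sup_{k\geq 2}\left\|\frac{P'(x)^{-1}P^{(k)}(x)}{k!}\right\|^{1/(k-1)}$$
and insert the identity $\Delta_P(x)\Delta_P(x)^{-1}$ to obtain the submultiplicative estimate
$$\left\|\frac{P'(x)^{-1}P^{(k)}(x)}{k!}\right\| \le \|P'(x)^{-1}\Delta_P(x)\|\cdot\left\|\frac{\Delta_P(x)^{-1}P^{(k)}(x)}{k!}\right\|.$$
This isolates the quantity $\|P'(x)^{-1}\Delta_P(x)\|$ appearing in the definition of $\mu(P,x)$, and reduces the problem to controlling the symmetric multilinear operator norm of $\Delta_P(x)^{-1}P^{(k)}(x)/k!$.

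The next step, which is the technical heart, is the pointwise bound for a single polynomial $p_i$ of degree $d_i$:
$$\left\|\frac{p_i^{(k)}(x)}{k!}\right\| \le \binom{d_i}{k}\|p_i\|\,\|(1,x)\|^{d_i-k}.$$
This is the standard consequence of the unitary invariance of the Bombieri--Weyl norm (together with the explicit formula for derivatives of a monomial $x^\nu$ and a Cauchy--Schwarz argument against the weights $\nu!(d-|\nu|)!/d!$). After dividing the $i$-th row by $\Delta_P(x)_{ii}=\sqrt{d_i}\|(1,x)\|^{d_i-1}$ and summing over $i$ in the Euclidean fashion used to define $\|P\|^2$, one obtains
$$\left\|\frac{\Delta_P(x)^{-1}P^{(k)}(x)}{k!}\right\| \le \|P\|\cdot\max_i\frac{\binom{d_i}{k}}{\sqrt{d_i}\,\|(1,x)\|^{k-1}} \le \|P\|\cdot\frac{\binom{d}{k}}{\|(1,x)\|^{k-1}},$$
absorbing the $\sqrt{d_i}$ together with one factor of $\binom{d_i}{k}$ into the crude bound $\binom{d}{k}/\sqrt{d}$.

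Combining the two estimates and taking the $(k-1)$-th root yields
$$\left\|\frac{P'(x)^{-1}P^{(k)}(x)}{k!}\right\|^{1/(k-1)} \le \frac{\bigl(\|P\|\|P'(x)^{-1}\Delta_P(x)\|\bigr)^{1/(k-1)}\binom{d}{k}^{1/(k-1)}}{\|(1,x)\|}.$$
Bounding the first factor by $\mu(P,x)$ (this is exactly why $\mu$ is defined with the $\max\{1,\,\cdot\,\}$, so that the exponent $1/(k-1)\le 1$ is harmless) and invoking the elementary inequality $\binom{d}{k}^{1/(k-1)}\le d^{3/2}/2$ for $k\ge 2$ finishes the proof after taking the supremum over $k\ge 2$.

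The main obstacle I expect is the pointwise derivative bound via the Bombieri--Weyl norm: one must carefully track which symmetric multilinear norm is being used and verify the Cauchy--Schwarz step that exploits the weights in the definition. The remaining pieces---the submultiplicative factorization through $\Delta_P(x)$ and the combinatorial inequality $\binom{d}{k}^{1/(k-1)}\le d^{3/2}/2$---are essentially bookkeeping.
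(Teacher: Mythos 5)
Your overall strategy is the standard Shub--Smale one (the paper itself gives no proof here; it simply cites \cite[Proposition 5]{hauenstein2012algorithm}, which rests on exactly this higher-derivative estimate): factor through $\Delta_P(x)$, bound $\bigl\|\Delta_P(x)^{-1}P^{(k)}(x)/k!\bigr\|$ row by row via the Bombieri--Weyl estimate $\|p_i^{(k)}(x)\|/k!\le\binom{d_i}{k}\|p_i\|\,\|(1,x)\|^{d_i-k}$, sum Euclideanly to pull out $\|P\|$, and use $\mu\ge 1$ to absorb the exponent $1/(k-1)$. The gap is in your last combinatorial step. The inequality you invoke, $\binom{d}{k}^{1/(k-1)}\le d^{3/2}/2$ for $k\ge 2$, is false: at $k=2$, $d=4$ it reads $6\le 4$ (it already fails at $d=3$). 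It cannot be repaired after you discard the factor $\sqrt{d_i}$ in passing from $\max_i\binom{d_i}{k}/\bigl(\sqrt{d_i}\,\|(1,x)\|^{k-1}\bigr)$ to $\binom{d}{k}/\|(1,x)\|^{k-1}$ in your display; note that your parenthetical about the ``crude bound $\binom{d}{k}/\sqrt{d}$'' is inconsistent with both that display and the final inequality you use. The $\sqrt{d_i}$ in $\Delta_P(x)_{ii}$ is there precisely to supply this factor, and $k=2$ is exactly the case where it is needed.

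The fix is to keep it. Since $\binom{d_i}{k}/\sqrt{d_i}$ is nondecreasing in $d_i$ for fixed $k\ge 2$, you have $\max_i\binom{d_i}{k}/\sqrt{d_i}\le\binom{d}{k}/\sqrt{d}$, and the correct elementary lemma is $\bigl(\binom{d}{k}/\sqrt{d}\bigr)^{1/(k-1)}\le d^{3/2}/2$ for $2\le k\le d$, equivalently $\binom{d}{k}\le d^{(3k-2)/2}/2^{k-1}$; this follows from $\binom{d}{k}\le d^k/k!\le d^k/2^{k-1}$ together with $d^k\le d^{(3k-2)/2}$, which holds exactly when $k\ge 2$. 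With this replacement, taking $(k-1)$-th roots and the supremum over $k\ge 2$ gives $\gamma(P,x)\le\mu(P,x)\,d^{3/2}/(2\|(1,x)\|)$ as claimed, and the rest of your argument --- the factorization $\|P'(x)^{-1}P^{(k)}(x)/k!\|\le\|P'(x)^{-1}\Delta_P(x)\|\,\|\Delta_P(x)^{-1}P^{(k)}(x)/k!\|$, the row-wise Euclidean summation, and the use of $\mu\ge 1$ to handle the root --- stands as written, granting the standard unitary-invariance proof of the pointwise Bombieri--Weyl derivative bound that you quote.
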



\subsection{Bounds on $\gamma$ for general systems}

In this section, we apply the results from  \S\ref{sec:alpha:polynomial} to systems of the form of Equation (\ref{eq:SystemWithIngredients}).  In particular, we call $P$ the part of $F$ consisting of polynomial equations.  We begin by observing that the results in \cite[Theorem 2.3]{hauenstein2017certifying} can be directly generalized to the setting of analytic functions.  In particular, we let
\begin{equation*}
\Delta_F =\left[\begin{array}{cc}
\Delta_P(x)\|P\| & \\
& I_m
\end{array}\right]
\end{equation*} 
be an $(n+m)\times (n+m)$ diagonal matrix.  When $F'$ is invertible at $x\in\CC^{n+m}$, we define 
\begin{equation*}
\mu(F,x):=\max\left\{1,\left\|F'(x)^{-1} 
\Delta_F
\right\|\right\}.
\end{equation*}
By the proof of \cite[Theorem 2.3]{hauenstein2017certifying}, we conclude that
$$
\gamma(F,x)\leq\mu(F,x)\sup_{k\geq 2}\left(\left(\frac{d^{\frac{3}{2}}}{2\|(1,x)\|}\right)^{2(k-1)}\hspace{-.05in}+\sum_{i=1}^m\left|\frac{g_i^{(k)}(x_i)}{k!}\right|^2\right)^{\frac{1}{2(k-1)}}.
$$
By the concavity of the of the $2(k-1)^{\text{th}}$ root, it follows that 
\begin{equation}\label{gamma:bound:analytic}
\gamma(F,x)\leq\mu(F,x)\left(\frac{d^{\frac{3}{2}}}{2\|(1,x)\|}+\sup_{k\geq 2}\sum_{i=1}^m\left|\frac{g_i^{(k)}(x_i)}{k!}\right|^{\frac{1}{k-1}}\right).
\end{equation}
Therefore, we observe that, in order to get a bound on $\gamma$, it is enough to bound $\left|\frac{g_i^{(k)}(t)}{k!}\right|^{\frac{1}{k-1}}$ independently of $k$ for each ingredient $g_i$.  In \cite{hauenstein2017certifying}, Hauenstein and Levandovskyy find a bound on these quantities using a recurrence relation from the defining linear differential equation with constant coefficients.  In this paper, we achieve such a bound via the Cauchy integral theorem.

\begin{lem}\label{lem:oracles}
Suppose that the following two oracles exist:
\begin{enumerate}
\item Given a univariate analytic function $g$ and a point $x\in\CC$ in the domain of $g$, there is an oracle which returns a positive value $R>0$ so that the radius of convergence of a power series for $g$ centered at $x$ is at least $R$.
\item Given a univariate analytic function $g$, a point $x\in\CC$ in the domain of $g$, and a radius $r$, there is an oracle which returns $M$ which is an upper bound on the value of $|g|$ on the closed disk $\overline{D}(x,r)$.
\end{enumerate}
Then, for $k\geq 2$,
$$
\left|\frac{g^{(k)}(t)}{k!}\right|^{\frac{1}{k-1}}\leq \frac{1}{r}\max\left\{1,\frac{M}{r}\right\}.
$$
\end{lem}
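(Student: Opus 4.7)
The plan is to derive the inequality directly from Cauchy's integral formula for derivatives, then dispatch the two cases $M\leq r$ and $M>r$ with elementary exponent arithmetic. The role of oracle (1) is to guarantee that $g$ is holomorphic on a closed disk $\overline{D}(t,r)$ (which we may assume by taking $r\leq R$), and the role of oracle (2) is to give us the bound $|g(z)|\leq M$ for $z\in\overline{D}(t,r)$; together these let us apply the standard Cauchy estimate.

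First I would invoke Cauchy's integral formula on the circle $|z-t|=r$:
$$
\frac{g^{(k)}(t)}{k!}=\frac{1}{2\pi i}\oint_{|z-t|=r}\frac{g(z)}{(z-t)^{k+1}}\,dz,
$$
which, after taking absolute values and using the $M$-bound on $|g|$, yields the basic estimate
$$
\left|\frac{g^{(k)}(t)}{k!}\right|\leq \frac{M}{r^{k}}.
$$
Raising both sides to the power $\frac{1}{k-1}$ gives
$$
\left|\frac{g^{(k)}(t)}{k!}\right|^{\frac{1}{k-1}}\leq \frac{M^{1/(k-1)}}{r^{k/(k-1)}}.
$$

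Next I would verify that the right-hand side is bounded above by $\frac{1}{r}\max\{1,M/r\}$ by splitting on the size of $M$ relative to $r$. In the case $M\leq r$, the claim $\tfrac{M^{1/(k-1)}}{r^{k/(k-1)}}\leq \tfrac{1}{r}$ reduces (after multiplying through by $r^{k/(k-1)}$) to $M^{1/(k-1)}\leq r^{1/(k-1)}$, which is just $M\leq r$. In the case $M>r$, the claim $\tfrac{M^{1/(k-1)}}{r^{k/(k-1)}}\leq \tfrac{M}{r^{2}}$ reduces, after rearranging exponents and using the identity $2-\frac{k}{k-1}=\frac{k-2}{k-1}$, to $r^{(k-2)/(k-1)}\leq M^{(k-2)/(k-1)}$, which holds since $k\geq 2$ and $r<M$.

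I do not expect any real obstacles: the only nontrivial content is the Cauchy estimate, and the rest is bookkeeping with exponents. The only subtlety worth flagging in the write-up is the implicit hypothesis that the chosen $r$ does not exceed the radius of convergence $R$ furnished by oracle (1), so that the contour integral and the supremum bound from oracle (2) are both valid; this is exactly the setting in which oracles (1) and (2) will be used downstream to bound $\gamma(F,x)$ via the estimate in Equation (\ref{gamma:bound:analytic}).
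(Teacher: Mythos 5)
Your proof is correct and follows essentially the same route as the paper: the Cauchy estimate $\bigl|g^{(k)}(t)/k!\bigr|\leq M/r^{k}$ followed by taking the $(k-1)$-th root and observing that $(M/r)^{1/(k-1)}\leq\max\{1,M/r\}$ for $k\geq 2$ (your two-case exponent check is just a more explicit version of that last step). The remark that one must take $r$ within the radius of convergence supplied by the first oracle matches how the lemma is applied in Theorem~\ref{theorem:gammabound}.
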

\begin{proof}
Using Cauchy's integral theorem, we have that 
$$
\frac{|g^{(k)}(x)|}{k!}=\left|\int_0^1\frac{g(x+re^{2\pi it})}{(re^{2\pi it})^k}dt\right|\leq \frac{M}{r^k}.
$$
Therefore,
$$
\left|\frac{g^{(k)}(x)}{k!}\right|^{\frac{1}{k-1}}\leq\frac{1}{r}\left(\frac{M}{r}\right)^{\frac{1}{k-1}}.
$$
Since $k\geq 2$, the $(k-1)^{\text{th}}$ root of $\frac{M}{r}$ is bounded as in the statement of the lemma.
\end{proof}

From this bound, which is independent of $k$, we can now derive a bound on $\gamma(F,x)$.  By substituting this formula into Inequality (\ref{gamma:bound:analytic}), we have a bound on $\gamma(F,x)$.  We collect this result in the following theorem:
\begin{thm}\label{theorem:gammabound}
Let $U\subset\CC^{n+m}$ and consider a system $F:U\rightarrow\CC^{n+m}$ as in Equation (\ref{eq:SystemWithIngredients}) and let $x\in\CC^{n+m}$.  Moreover, assume that there exist oracles as in the statement of Lemma \ref{lem:oracles}.  For each $g_i$, let $R_i$ be a positive lower bound on the radius of convergence for $g_i$ at $x_i$ (given by the first oracle in Lemma \ref{lem:oracles}).  For each $i$, fix $0<r_i<R_i$ to be a positive value strictly less than the radius of convergence.  Then, using the second oracle in Lemma \ref{lem:oracles}, let $M_i$ be an upper bound on $|g_i|$ on the closed disk $\overline{D}(x_i,r_i)$.  For each $i$, let 
$$
C_i=\frac{1}{r_i}\max\left\{1,\frac{M_i}{r_i}\right\}.
$$
Then,
$$
\gamma(F,x)\leq\mu(F,x)\left(\frac{d^{\frac{3}{2}}}{2\|(1,x)\|}+\sum_{i=1}^mC_i\right).
$$
\end{thm}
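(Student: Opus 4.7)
The plan is to derive the stated bound by chaining the two $\gamma$-estimates already established in the section. First, Inequality (\ref{gamma:bound:analytic}), obtained by adapting \cite[Theorem 2.3]{hauenstein2017certifying} to analytic ingredients and then applying concavity of the $2(k-1)$-th root, reduces the problem to bounding the quantities $|g_i^{(k)}(x_i)/k!|^{1/(k-1)}$ independently of $k$ for each ingredient $g_i$. Second, Lemma \ref{lem:oracles} supplies precisely such a $k$-independent estimate in terms of the output of the two oracles.

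Concretely, for each $g_i$ I would invoke the first oracle at $x_i$ to obtain the lower bound $R_i$ on the radius of convergence, choose any $0 < r_i < R_i$ so that $\overline{D}(x_i, r_i)$ lies inside the disk of convergence, and then invoke the second oracle to obtain the magnitude bound $M_i$ on $|g_i|$ over that closed disk. Lemma \ref{lem:oracles} then yields $|g_i^{(k)}(x_i)/k!|^{1/(k-1)} \leq C_i$ for every $k \geq 2$, where $C_i$ is the constant defined in the statement of the theorem and does not depend on $k$. Since this bound holds uniformly in $k$ for each $i$, the inner sum $\sum_{i=1}^m |g_i^{(k)}(x_i)/k!|^{1/(k-1)}$ is at most $\sum_{i=1}^m C_i$ for every $k$, so its supremum over $k \geq 2$ is bounded by $\sum_{i=1}^m C_i$ as well. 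Substituting this into Inequality (\ref{gamma:bound:analytic}) yields the desired bound on $\gamma(F,x)$.

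There is essentially no obstacle: once Inequality (\ref{gamma:bound:analytic}) and Lemma \ref{lem:oracles} are in hand, the theorem is an immediate composition. The one conceptual care-point is the interchange of supremum and sum in the last step, but the uniformity of $C_i$ in $k$ makes this trivial. In practice, the quality of the resulting estimate hinges on the choice of $r_i$: a smaller $r_i$ inflates the $1/r_i$ prefactor, while taking $r_i$ near $R_i$ typically inflates $M_i$, so the most informative bound comes from roughly optimizing this trade-off; this practical aspect is worth flagging but is not part of the proof.
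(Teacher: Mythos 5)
Your proposal is correct and matches the paper's own argument: the theorem is obtained precisely by substituting the $k$-independent bound $C_i$ from Lemma \ref{lem:oracles} (with the oracle-supplied $R_i$, $r_i$, $M_i$) into Inequality (\ref{gamma:bound:analytic}), exactly as you describe. The remark on the trade-off in choosing $r_i$ also mirrors the paper's Remark \ref{rem:alpharadius}.
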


\begin{rmk}\label{rem:alpharadius}
We remark that the choice of $r_i$ is critically important in this computation.  When $r_i$ is small, $\frac{1}{r_i}$ becomes large, and when $r_i$ is quite large, the disk $\overline{D}(x,r_i)$ approaches a singularity of $g_i$, so $M_i$ is quite large.  Therefore, different choices of $r_i$ can affect the value of $C_i$ drastically.  We provide experimental data illustrating this issue in \S\ref{sec:examples}.
\end{rmk}

We observe that we may apply the same approach as in Theorem \ref{theorem:gammabound} to both $g'$ and $g''$ to achieve potentially tighter bounds on $\gamma(F,x)$.  We make this explicit in the following corollary:
\begin{cor}\label{cor:boundforIngredients}
Suppose that the conditions of Theorem \ref{theorem:gammabound} hold and, in addition, the oracles in the statement of Lemma \ref{lem:oracles} exist for both $g'$ and $g''$.  Let $M_i'$ and $M_i''$ be upper bounds on $|g_i'|$ and $|g_i''|$, respectively, given by the oracle from Lemma \ref{lem:oracles} on $\overline{D}(x_i,r_i)$.  Then, the $C_i$ in Theorem \ref{theorem:gammabound} can be replaced by
\begin{equation}\label{eq:Cbounds}
C_i=\frac{1}{r_i}\max\left\{1,\min\left\{\frac{M_i}{r_i},\frac{M_i'}{2},\frac{M_i''r_i}{2}\right\}\right\}.
\end{equation}
\end{cor}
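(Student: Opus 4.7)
The plan is to re-run the Cauchy-integral argument of Lemma \ref{lem:oracles} two additional times, once starting from the derivative $g_i'$ and once from $g_i''$, and then combine the three resulting upper bounds on $|g_i^{(k)}(x_i)/k!|^{1/(k-1)}$ by taking the pointwise minimum.

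First, I would apply Cauchy's integral theorem to $g_i'$ on the closed disk $\overline{D}(x_i,r_i)$. Since $g_i^{(k)}=(g_i')^{(k-1)}$ for every $k\geq 2$, the standard Cauchy estimate gives $|(g_i')^{(k-1)}(x_i)|/(k-1)!\leq M_i'/r_i^{k-1}$; dividing by $k\geq 2$ yields
\[\left|\frac{g_i^{(k)}(x_i)}{k!}\right|\leq \frac{M_i'}{k\, r_i^{k-1}}\leq \frac{M_i'/2}{r_i^{k-1}}.\]
Taking $(k-1)$-th roots and applying the same monotonicity trick used in Lemma \ref{lem:oracles} (the $(k-1)$-th root of a positive number is bounded by $\max\{1,\cdot\}$, uniformly in $k$) produces the bound $\frac{1}{r_i}\max\{1, M_i'/2\}$.

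Second, I would repeat the argument starting from $g_i''$. Using $g_i^{(k)}=(g_i'')^{(k-2)}$ and $k(k-1)\geq 2$ for $k\geq 2$, Cauchy's estimate yields
\[\left|\frac{g_i^{(k)}(x_i)}{k!}\right|\leq \frac{M_i''}{k(k-1)\, r_i^{k-2}} \leq \frac{M_i''/2}{r_i^{k-2}} = \frac{M_i''\, r_i/2}{r_i^{k-1}},\]
and the same $(k-1)$-th root manipulation gives $\frac{1}{r_i}\max\{1, M_i'' r_i/2\}$. The only point that needs a careful sanity check is the boundary case $k=2$, where $(g_i'')^{(k-2)}=g_i''$ and the factor $k(k-1)$ equals exactly $2$; the formula remains valid in this case.

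Finally, combining these two new bounds with the original $\frac{1}{r_i}\max\{1,M_i/r_i\}$ from Lemma \ref{lem:oracles} and observing that $\max\{1,\cdot\}$ is monotone, so that
\[\min\bigl\{\max\{1,a\},\max\{1,b\},\max\{1,c\}\bigr\}=\max\bigl\{1,\min\{a,b,c\}\bigr\},\]
I recover exactly the expression for $C_i$ displayed in Equation (\ref{eq:Cbounds}). Substituting this smaller $C_i$ into Theorem \ref{theorem:gammabound} immediately yields the corollary. The main obstacle, to the extent that there is one, is the index bookkeeping involved in converting Cauchy bounds on derivatives of $g_i'$ and $g_i''$ into bounds on derivatives of $g_i$ of the correct order $k$; once the algebra $1/r_i^{k-2}=r_i/r_i^{k-1}$ is in place, everything else is a direct repetition of the proof of Lemma \ref{lem:oracles}.
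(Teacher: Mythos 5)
Your proposal is correct and follows essentially the same route as the paper: bound $|g_i^{(k)}(x_i)/k!|$ via the Cauchy estimate applied to $g_i'$ (using $k!\geq 2(k-1)!$) and to $g_i''$ (using $k(k-1)\geq 2$), take $(k-1)$-th roots with the $\max\{1,\cdot\}$ trick from Lemma \ref{lem:oracles}, and take the minimum of the three uniform-in-$k$ bounds, which is exactly Equation (\ref{eq:Cbounds}). The only difference is that you write out the $g_i''$ case and the identity $\min\{\max\{1,a\},\max\{1,b\},\max\{1,c\}\}=\max\{1,\min\{a,b,c\}\}$ explicitly, whereas the paper only illustrates the $M_i'$ case and declares the rest similar.
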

\begin{proof}
We illustrate the key step in the computation for $M_i'$; the other cases are similar or appear in Theorem \ref{theorem:gammabound}.  We observe that since $k\geq 2$,
$$
\left|\frac{g_i^{(k)}(x)}{k!}\right|^{\frac{1}{k-1}}\leq
\left|\frac{(g_i')^{(k-1)}(x)}{2(k-1)!}\right|^{\frac{1}{k-1}}\leq\frac{1}{r_i}\left(\frac{M_i}{2r_i}\right)^{\frac{1}{k-1}},
$$
where the second inequality follows from applying the inequality of Lemma \ref{lem:oracles} to $g'$.  By considering the possible magnitudes of $\frac{M_i}{2r_i}$, the desired result follows.
\end{proof}

Based on the discussion above we outline an algorithm to certify a root of the system $F$.

\begin{algorithm} {\textbf{AlphaTest$(F,x,r_i,M_i,M'_i,M''_i)$:}}
\renewcommand{\algorithmicrequire}{\textbf{Input:}}
\renewcommand{\algorithmicensure}{\textbf{Output:}}
\begin{algorithmic}
  \REQUIRE{A differentiable system of functions $F:U\rightarrow \CC^{n+m}$ for an open set $U\subset \CC^{n+m}$, a point $x\in \mathbb{C}^{n+m}$, a positive value $r_i$ such that $0<r_i<R_i$ for each $i$, and upper bounds $M_i,M'_i,M''_i$ on $|g_i|,|g'_i|,|g''_i|$ on the closed disk $\overline{D}(x_i,r_i)$ for each $i$.}
  \ENSURE{The boolean value of a condition that implies ``$x$ is an approximate solution of $F$''.}
  
  \smallskip
  \hrule
  \smallskip

  \STATE Compute constants $\beta(F,x), \mu(F,x)$ and $C_i$ in Corollary \ref{cor:boundforIngredients}.
  \RETURN {$\beta(F,x)\mu(F,x)\left(\frac{d^{\frac{3}{2}}}{2\|(1,x)\|}+\sum\limits_{i=1}^mC_i\right)<\frac{13-3\sqrt{17}}{4}$,}
\end{algorithmic}
\end{algorithm}

In the next section, we show that the oracles required by Lemma \ref{lem:oracles} exist for \Dfinite\ functions.

\begin{rmk}
We observe that the results in this section apply when $\CC$ is replaced by $\RR$.  In particular, real roots are certified using the standard techniques of $\alpha$-theory for real roots.  The derived bounds on $\gamma$, however, use the complex values of the radius of convergence and maximum of the function, not merely the real part.
\end{rmk}


\section{The case of $D$-finite functions}\label{sec:Dfinite}



In this section, we show that the oracles needed in \S\S \ref{sec:Krawczyk} and \ref{sec:alpha-theory} exist for \Dfinite\ functions.  These oracles fall into two classes: evaluating a \Dfinite\ function or finding the radius of convergence of a \Dfinite\ function. We point out that the oracles can be obtained from known software implementations. 

\subsection{Evaluating \texorpdfstring{$\bm{D}$}{D}-finite functions}
The analytic continuation algorithm of Chudnovsky and Chudnovsky, first presented in \cite{david1990computer} and further developed in \cite{van1999fast}, provides an algorithm to approximate the value of a \Dfinite\ function.  In particular, the \texttt{SageMath} \cite{sagemath} package \texttt{ore\_algebra.analytic}~\cite{mezzarobba2016rigorous} uses this technique and provides functions which compute an interval containing the image of a \Dfinite\ function over a point or interval.

The output of this algorithm can be used to calculate intervals or boxes containing $F$ and $F'$ when evaluated at points or over intervals (we note that the derivative of a \Dfinite\ function is also \Dfinite).

\begin{rmk}
In the real case, an alternate approximation method using Chebyshev polynomials is presented in \cite{benoit2017rigorous}.  These methods return Chebyshev polynomials such that, on an interval $I$, the point-wise difference between the polynomial and the prescribed \Dfinite\ function is within a specified error.  By applying interval arithmetic on this polynomial, a \Dfinite\ function can be evaluated on an interval.  An implementation of this approximation is available in \texttt{Maple} \cite{maple1994waterloo} and experimental source code is referenced in \cite{benoit2017rigorous}.
\end{rmk}

\subsection{The radius of convergence for \texorpdfstring{$\bm{D}$}{D}-finite functions}

Mezzarobba and Salvy present an algorithm to compute the majorant series for \Dfinite\ function in \cite{mezzarobba2010effective}.  In this case, the radius of convergence for the majorant series is a lower bound on the radius of convergence for the corresponding \Dfinite\ function.  The majorant series provided in \cite{mezzarobba2010effective} has a particularly simple presentation, where the radius of convergence can be identified by the vanishing of a linear term of in denominator, see \cite[Equation (18)]{mezzarobba2010effective}.  The \texttt{Maple} package \texttt{numGfun} \cite{mezzarobba2010numgfun} and the \texttt{SageMath} \cite{sagemath} package \texttt{ore\_algebra.analytic} provide algorithms for computing this majorant series.  For extensions and details of the majorant series approach, see \cite{mezzino1998leibniz} and \cite{van2003majorants}.

\section{Implementation and experiments}\label{sec:examples}
In this section, as a proof of concept, we provide some computational and experimental results for our certification methods for \Dfinite\ functions, as described in \S\ref{sec:Dfinite}.  Our implementations are in \texttt{SageMath} \cite{sagemath}.
We use the \texttt{ore\_algebra.analytic} package from \cite{mezzarobba2016rigorous} for
evaluation of \Dfinite\ functions (function \texttt{numerical\_solution}) and for estimating the radius of convergence for the majorant series (function \texttt{leading\_coefficient}).  The code and all examples in this section are available at\\
\centerline{
  \url{https://github.com/klee669/DfiniteComputationResults} 
  }

\subsection{Comparison between \texorpdfstring{$\bm{\alpha}$}{alpha}-theory and the Krawczyk method.}

The {\em error function} $\erf(t)$ is a basic example of a \Dfinite\ function which satisfies the following differential equation and initial conditions:
\begin{equation*}
\erf''(t) + 2t \erf'(t) =0, \quad \erf(0)=0,\quad \erf'(0)=\frac{2}{\sqrt{\pi}}.
\end{equation*}
We note that the error function has no singularities in $\mathbb{C}$.  We consider the following square system of equations along with the corresponding square function $F$.
\begin{equation}\label{systemWithErrorFunction}
\left\{\begin{aligned}
t_1^2 +t_2^2 &= 4\\
2\erf(t_1)\erf(t_2) &= 1
\end{aligned}\right\}\;\text{with}\;
F(t_1,t_2,t_3,t_4)= \begin{bmatrix}
t_1^2 + t_2^2 - 4\\
t_3t_4 - \frac{1}{2}\\
t_3 - \erf(t_1)\\
t_4 - \erf(t_2)
\end{bmatrix}.
\end{equation}
Using $\mathtt{Mathematica}$ \cite{Mathematica}, we find the following potential solution to this system of equations:
\begin{equation}\label{eq:potentialSoln}
t=(t_1,t_2,t_3,t_4) = (0.480322,1.94147,0.503058,0.993961).
\end{equation}

Using both $\alpha$-theory and the Krawczyk method, we certify that this point approximates a solution to the system of equations in Equation (\ref{systemWithErrorFunction}). In order to study the accuracy required for the $\alpha$-theory-based and Krawczyk method-based tests, we round the coordinates of the point in Equation (\ref{eq:potentialSoln}) to $d$ decimal places and vary $d$ in our experiments appearing in Table \ref{comparison-alpha}. For Krawczyk method-based tests, we also to specify a region by choosing the box whose side length is $2\times 10^{-d}$ centered at the rounded approximation. Moreover, we choose $F'(m(I))^{-1}$ as the invertible matrix $Y$ in Equation (\ref{eq:krawczykinterval}).
\begin{table}[hbt]
	\begin{tabular}{c||c|c}
		decimal places	& Krawczyk method & $\alpha$-theory\\ 
		\hline
		$0$ & fail & fail \\
		$1$	& pass &  fail	 \\ 
		$2$	& pass & fail \\ 
		$3$	& pass &  pass
	\end{tabular}
	\medskip
	\caption{Comparison between the precision required for the Krawczyk-based and \bm{$\alpha$}-theory-based methods.
        }
	\label{comparison-alpha}
\end{table}
      
For the Krawczyk method, a pass indicates that the generalization of the Newton operator is contractive within the given region using the test described in \S\ref{sec:Krawczyk}.  On the other hand, for the $\alpha$-theory-based test from \S\ref{sec:alpha-theory}, a pass indicates that the approximation is certified to be an approximate solution.  Throughout this example, we use $r=0.4$ for the $\alpha$-theory-based test as that gives (nearly) the best value for $r_i$, cf. Remark \ref{rem:alpharadius}.

We observe that Equation (\ref{systemWithErrorFunction}) is an example of a system which could not be effectively studied using the previous $\alpha$-theory techniques.  We also note that the Krawczyk method succeeds with less precision than the $\alpha$-theory-based test.  This behavior is not surprising as the Krawczyk method has a weaker convergence result and uses less pessimistic estimates in its computation.

\subsection{The radius for the \texorpdfstring{$\bm{\alpha}$}{alpha}-theory-based test.}

In this section, we provide some experimental data illustrating the care that must be taken in choosing the radius from \S\ref{sec:alpha-theory}, see Remark \ref{rem:alpharadius}.  We consider a Bessel function (of order $\nu$) $y(t)=C_1Y_\nu(t) + C_2 J_\nu(t)$.  This function is a \Dfinite\ function satisfying the following differential equation:
\begin{equation*}
t^2 y''(t) + t y'(t) + (t^2 - \nu^2) y(t) = 0.
\end{equation*}
We consider the case where $\nu=9$.  In this case, the Bessel function has a regular singularity at $t=0$, its derivative has singularities at $t=0,\pm 9$, and the second derivative has singularities at $t\approx 0,\pm 8.2923, \pm 9, \pm 9.7076$.  Consider the following system of equations and corresponding system $F$ involving a Bessel function and an error function.
$$
\left\{\begin{aligned}
t_1^2 +t_2^2 &= 61\\
2\erf\left(\frac{1}{2}\left(Y_9(t_2) + J_9(t_2)\right) + t_1\right)\left(Y_9(t_2) + J_9(t_2)\right) &= 11
\end{aligned}\right\}\;\text{with}
$$
$$
F(t_1,t_2,t_3,t_4,t_5)=\begin{bmatrix}
t_1^2 + t_2^2  -  61\\
2 t_4 t_5- 11\\
t_3 - \frac{1}{2} t_5 - t_1\\
t_4 - \erf(t_3)\\
t_5 - \left(Y_9(t_2) + J_9(t_2)\right)
\end{bmatrix}.
$$
Using $\mathtt{Mathematica}$ \cite{Mathematica}, we find the following potential solution to this system of equations:
\ifthenelse{\mainfilecheck{1} > 0}{\begin{multline*}
	t=(t_1,t_2,t_3,t_4,t_5) =\\ (6.27899,4.64481,-0.38382,-0.41274,-13.32563).
	\end{multline*}
}{\begin{equation*}
t=(t_1,t_2,t_3,t_4,t_5) = (6.27899,4.64481,-0.38382,-0.41274,-13.32563).
\end{equation*}
}

We apply the $\alpha$-theory-based method of \S\ref{sec:alpha-theory} in attempt to certify this solution while varying radii using the experimentally found lower bound for the radius of convergence, $R=8.2923$. We summarize our results in Table \ref{alphavalues}.

\vspace{-2mm}

\begin{table}[hbt]
	\centering
		\begin{tabular}{c||c|c|c}		
			radius & $\gamma(F,t)$ & $\alpha(F,t)$ & passes $\alpha$-test?  \\ \hline
			$10^{-6}  R$ & $6.9909\cdot 10^7$ & $1.3078$ & no\\
			$10^{-5}  R$ & $6.9909\cdot 10^6$ & $0.1308$ & yes \\
			$10^{-4}  R$ & $6.9912\cdot 10^5$ & $0.0131$ & yes\\
			$10^{-3} R$ & $2.2485\cdot 10^5$ & $0.0042$ & yes\\ 
			$10^{-2}  R$ & $1.9722\cdot 10^6$ & $0.0369$ & yes\\
			$3\cdot 10^{-2}  R$  & $2.4525\cdot 10^7$ & $0.4588$ & no		
								\end{tabular}
 \medskip
 \caption{$\bm{\gamma(F,t)}$ and $\bm{\alpha(F,t)}$ values depending on radii.
 }
   \label{alphavalues}
\end{table}
\vspace{-6mm}
We observe that, as expected from Remark \ref{rem:alpharadius}, a radius which is either too small or too large (when compared to the distance to the singularity) can result in a need for increased precision in the $\alpha$-theory-based test.

\subsection{Comparing \texorpdfstring{$\bm{\alpha}$}{alpha}-theory-based tests on polynomial-exponential systems} 
In this section, we compare the bounds on $\gamma$ that we derive to those from the polynomial-exponential systems in \cite{hauenstein2012algorithm}.  In particular, we consider the following example in the class of polynomial-exponential systems (which are a special case of polynomial-\Dfinite\ systems):
\begin{equation*}
\left\{e^{4t}=0.0183\right\}\;\text{with}\; 
F(t_1,t_2) = \begin{bmatrix}
t_2 - 0.0183\\[.1cm]
t_2 - e^{4t_1}
\end{bmatrix}.
\end{equation*}
For the approximate solution $(t_1,t_2)=(-1,0.018316)$, we compare the bounds on $\gamma(F,t)$ for the method presented in this paper to the $\gamma$ from \cite{hauenstein2012algorithm}, as computed by \texttt{alphaCertified}.  We separate out the three bounds on $\gamma$ from Corollary \ref{cor:boundforIngredients}.  Figure \ref{fig:gammabound} compares the results from \texttt{alphaCertified} and our method.  There, we see that both theoretically and in our implementation, the computed $\gamma$-value may be less than that in \cite{hauenstein2012algorithm}, as computed by \texttt{alphaCertified}.
We note that, in Figure \ref{fig:gammabound}, the implementation bounds differ from the theoretical bounds because the \texttt{ore\_algebra.analytic} package returns inexact outputs when it evaluates functions over an interval.

\ifthenelse{\mainfilecheck{1} > 0}{\begin{figure}[hbt]
		\centering
		\begin{tikzpicture}[xscale=2.2, yscale=0.022]
		\tikzstyle{every node}=[font=\Small]
		\draw [->] (-.1,0) -- (2,0) coordinate (x axis) node[right] {$r$};
		\draw [->] (0,-.5) -- (0,200) coordinate (y axis) node[above] {$\gamma(F,t)$};
		\foreach \x/\xtext in {0.5, 1, 1.5}
		\draw (\x cm,100pt) -- (\x cm,-100pt) node[anchor=north,fill=white] {$\xtext$};
		\foreach \x in {0.1,0.2,0.3,0.4,0.6,0.7,0.8,0.9,1.1,1.2,1.3,1.4,1.6,1.7,1.8,1.9}
		\draw[very thin, gray] (\x cm,50pt) -- (\x cm,0pt);	
		\foreach \y/\ytext in { 40, 80, 120,160}
		\draw (1pt,\y cm) -- (-1pt,\y cm) node[anchor=east,fill=white] {$\ytext$};
		\foreach \y in {8,16,24,32,48,56,64,72,88,96,104,112,128,136,144,152,168,176,184,192}
		\draw[very thin, gray] (-.5pt,\y cm) -- (0pt,\y cm);	
		1	\draw[very thick, red] (0, 84.1574) -- (2,84.1574) node[below,black] {\color{red} $\gamma_{\texttt{alphaCertified}}$};
		\draw[very thick, dotted] plot[domain=0.21:1.88,samples=300] (\x, { 19.3171 *((1/2.82866)+(max(1,(exp(-4 + 4*\x))/(\x))/(\x))}) node[right,black] {$\gamma_0$};	
		\draw[very thick] plot[domain =0.11:1.05,samples=300] (\x, { 19.3171 *((1/2.82866)+ max(1,(8*\x) * exp(-4 + 4*\x))/(\x))}) node[right,black] {$\gamma_{2}$};	
		\draw[very thick, dashed] plot[domain =0.1:1.5,samples=300] (\x, { 19.3171 *((1/2.82866)+ max(1,2*exp(-4 + 4*\x))/(\x))}) node[right,black] {$\gamma_{1}$};		
		\draw[thick, draw=blue] (0.100000000000000,200)--(0.120000000000000,167.878949410761);
		\draw[thick, draw=blue] (0.120000000000000,167.878949410761)--(0.150000000000000,135.669507704824);
		\draw[thick, draw=blue] (0.150000000000000,135.669507704824)--(0.180000000000000,114.196546567533);
		\draw[thick, draw=blue] (0.180000000000000,114.196546567533)--(0.210000000000000,98.8587171837533);
		\draw[thick, draw=blue] (0.210000000000000,98.8587171837533)--(0.240000000000000,95.0126056937628);
		\draw[thick, draw=blue] (0.240000000000000,95.0126056937628)--(0.270000000000000,113.831915150351);
		\draw[thick, draw=blue] (0.270000000000000,113.831915150351)--(0.300000000000000,132.139202379838);
		\draw[thick, draw=blue] (0.300000000000000,132.139202379838)--(0.330000000000000,154.272383639238);
		\draw[thick, draw=blue] (0.330000000000000,154.272383639238)--(0.360000000000000,60.5141437243047);
		\draw[thick, draw=blue] (0.360000000000000,60.5141437243047)--(0.390000000000000,56.3847281209795);
		\draw[thick, draw=blue] (0.390000000000000,56.3847281209795)--(0.420000000000000,52.8452290324150);
		\draw[thick, draw=blue] (0.420000000000000,52.8452290324150)--(0.450000000000000,49.7776631556591);
		\draw[thick, draw=blue] (0.450000000000000,49.7776631556591)--(0.480000000000000,55.3815509957931);
		\draw[thick, draw=blue] (0.480000000000000,55.3815509957931)--(0.510000000000000,63.7763379333043);
		\draw[thick, draw=blue] (0.510000000000000,63.7763379333043)--(0.540000000000000,71.3991972219532);
		\draw[thick, draw=blue] (0.540000000000000,71.3991972219532)--(0.570000000000000,80.4979834880897);
		\draw[thick, draw=blue] (0.570000000000000,80.4979834880897)--(0.600000000000000,91.3475848547440);
		\draw[thick, draw=blue] (0.600000000000000,91.3475848547440)--(0.630000000000000,104.282502570470);
		\draw[thick, draw=blue] (0.630000000000000,104.282502570470)--(0.660000000000000,119.708918401316);
		\draw[thick, draw=blue] (0.660000000000000,119.708918401316)--(0.690000000000000,138.119651259661);
		\draw[thick, draw=blue] (0.690000000000000,138.119651259661)--(0.720000000000000,160.112480322831);
		\draw[thick, draw=blue] (0.720000000000000,160.112480322831)--(0.750000000000000,186.412633122763);	
		\draw[dotted,thick, blue] (0.75,186.412633)--(0.76,200) node[ black] {\color{blue}$\gamma_{\text{implementation}}\quad\quad$};		
		\end{tikzpicture}
		
                \caption{Comparison of computed $\bm{\gamma}$ values in this paper to those from \texttt{alphaCertified}. $\bm{\gamma_0,\gamma_1,\gamma_2}$ indicate bounds computed by $\bm{\frac{M_i}{r_i}, \frac{M_i'}{2},\frac{M_i''r_i}{2}}$ in (\ref{eq:Cbounds}) respectively. $\bm{\gamma_{\text{implementation}}}$ indicates bounds computed by our implementation.
                } \label{fig:gammabound}	
	\end{figure}
}{\begin{figure}[hbt]
	\centering
	\begin{tikzpicture}[xscale=3, yscale=0.03]
	\tikzstyle{every node}=[font=\Small]
	\draw [->] (-.1,0) -- (2,0) coordinate (x axis) node[right] {$r$};
	\draw [->] (0,-.5) -- (0,200) coordinate (y axis) node[left] {upper bound on $\gamma(F,t)$};
	\foreach \x/\xtext in {0.5, 1, 1.5}
	\draw (\x cm,100pt) -- (\x cm,-100pt) node[anchor=north,fill=white] {$\xtext$};
	\foreach \x in {0.1,0.2,0.3,0.4,0.6,0.7,0.8,0.9,1.1,1.2,1.3,1.4,1.6,1.7,1.8,1.9}
	\draw[very thin, gray] (\x cm,50pt) -- (\x cm,0pt);	
	\foreach \y/\ytext in { 40, 80, 120,160}
	\draw (1pt,\y cm) -- (-1pt,\y cm) node[anchor=east,fill=white] {$\ytext$};
	\foreach \y in {8,16,24,32,48,56,64,72,88,96,104,112,128,136,144,152,168,176,184,192}
	\draw[very thin, gray] (-.5pt,\y cm) -- (0pt,\y cm);	
	1	\draw[very thick, red] (0, 84.1574) -- (2,84.1574) node[right,black] {$\gamma_{\texttt{alphaCertified}}$};
	\draw[very thick, dotted] plot[domain=0.21:1.88,samples=300] (\x, { 19.3171 *((1/2.82866)+(max(1,(exp(-4 + 4*\x))/(\x))/(\x))}) node[right,black] {$\gamma_0$};	
	\draw[very thick] plot[domain =0.11:1.05,samples=300] (\x, { 19.3171 *((1/2.82866)+ max(1,(8*\x) * exp(-4 + 4*\x))/(\x))}) node[right,black] {$\gamma_{2}$};	
	\draw[very thick, dashed] plot[domain =0.1:1.5,samples=300] (\x, { 19.3171 *((1/2.82866)+ max(1,2*exp(-4 + 4*\x))/(\x))}) node[right,black] {$\gamma_{1}$};		
	\draw[thick, draw=blue] (0.100000000000000,200)--(0.120000000000000,167.878949410761);
	\draw[thick, draw=blue] (0.120000000000000,167.878949410761)--(0.150000000000000,135.669507704824);
	\draw[thick, draw=blue] (0.150000000000000,135.669507704824)--(0.180000000000000,114.196546567533);
	\draw[thick, draw=blue] (0.180000000000000,114.196546567533)--(0.210000000000000,98.8587171837533);
	\draw[thick, draw=blue] (0.210000000000000,98.8587171837533)--(0.240000000000000,95.0126056937628);
	\draw[thick, draw=blue] (0.240000000000000,95.0126056937628)--(0.270000000000000,113.831915150351);
	\draw[thick, draw=blue] (0.270000000000000,113.831915150351)--(0.300000000000000,132.139202379838);
	\draw[thick, draw=blue] (0.300000000000000,132.139202379838)--(0.330000000000000,154.272383639238);
	\draw[thick, draw=blue] (0.330000000000000,154.272383639238)--(0.360000000000000,60.5141437243047);
	\draw[thick, draw=blue] (0.360000000000000,60.5141437243047)--(0.390000000000000,56.3847281209795);
	\draw[thick, draw=blue] (0.390000000000000,56.3847281209795)--(0.420000000000000,52.8452290324150);
	\draw[thick, draw=blue] (0.420000000000000,52.8452290324150)--(0.450000000000000,49.7776631556591);
	\draw[thick, draw=blue] (0.450000000000000,49.7776631556591)--(0.480000000000000,55.3815509957931);
	\draw[thick, draw=blue] (0.480000000000000,55.3815509957931)--(0.510000000000000,63.7763379333043);
	\draw[thick, draw=blue] (0.510000000000000,63.7763379333043)--(0.540000000000000,71.3991972219532);
	\draw[thick, draw=blue] (0.540000000000000,71.3991972219532)--(0.570000000000000,80.4979834880897);
	\draw[thick, draw=blue] (0.570000000000000,80.4979834880897)--(0.600000000000000,91.3475848547440);
	\draw[thick, draw=blue] (0.600000000000000,91.3475848547440)--(0.630000000000000,104.282502570470);
	\draw[thick, draw=blue] (0.630000000000000,104.282502570470)--(0.660000000000000,119.708918401316);
	\draw[thick, draw=blue] (0.660000000000000,119.708918401316)--(0.690000000000000,138.119651259661);
	\draw[thick, draw=blue] (0.690000000000000,138.119651259661)--(0.720000000000000,160.112480322831);
	\draw[thick, draw=blue] (0.720000000000000,160.112480322831)--(0.750000000000000,186.412633122763);	
	\draw[dotted,thick, blue] (0.75,186.412633)--(0.76,200) node[ black] {$\gamma_{\text{implementation}}\quad\quad$};		
	\end{tikzpicture}
	
	\caption{Comparison of computed $\bm{\gamma}$ values in this paper to those from \texttt{alphaCertified}. $\bm{\gamma_0,\gamma_1,\gamma_2}$ indicate bounds computed by $\bm{\frac{M_i}{r_i}, \frac{M_i'}{2},\frac{M_i''r_i}{2}}$ in (\ref{eq:Cbounds}) respectively. $\bm{\gamma_{\text{implementation}}}$ indicates bounds computed by the implementation. For some choices of $r$, $\gamma_0,\gamma_1,\gamma_2$ and $\gamma_{\text{implementation}}$ have lower values of bounds than \texttt{alphaCertified}.} \label{fig:gammabound}
\end{figure}
}

\subsection{Application to an optimization problem} 
We also use our implementation to solve an optimization problem involving the perimeters of ellipses.  Suppose that $E_1, E_2$ are ellipses with major axes of lengths $1$ and $2$, respectively, whose perimeters sum to $17$.  Suppose that we want to maximize 
\[e_1A_1 + e_2A_2\]
where $e_i$ is the eccentricity and $A_i$ is the area of $E_i$.  Since the area of an ellipse is the product of $\pi$ and the lengths of its axes, if we let $b_i$ be the length of the minor axis of $E_i$, this maximization problem is equivalent to the following problem:
\begin{align*}
\text{Maximize}\quad & e_1b_1 + 2e_2b_2\\
\text{subject to}\quad  & e_1^2+b_1^2=1\\
& 4e_2^2+b_2^2=4\\
& 4E(e_1)+8E(e_2)=17 
\end{align*}
where $E(t)=\int_0^1\frac{\sqrt{1-t^2x^2}}{\sqrt{1-x^2}}dx$ is the complete elliptic integral of the second kind, which satisfies the differential equation
\[(t-t^3)E''(t)+(1-t^2)E'(t)+tE(t)=0.\]
Liouville \cite{liouville1840memoire} showed that $E(t)$ is not algebraic. We can rewrite this maximization problem as a square system of equations by setting up a Lagrange multiplier system.  Since derivatives of \Dfinite\ functions are still \Dfinite\ functions (for the differential equation of the derivative of \Dfinite\ functions, see \cite{van1999fast}), the square system can be certified by the $\alpha$ theory- and the Krawczyk method-based approaches. In our experiments, we use the approximate solution
\[\begin{array}{l}
(b_1,b_2) = (0.8337853,1.5601133), \vspace{1ex}\\ (e_1,e_2) = (0.5520888,0.6257089)\quad \text{and } \vspace{1ex}\\
(\lambda_1,\lambda_2,\lambda_3) = (-0.3310737,-0.4010663,0.0590727)
\end{array}\]
With the choice of the radius $r=0.01$ and an approximate solution with $7$ digits of precision, the $\alpha$-theory-based test certifies the approximate solution. On the other hand, Krawczyk method-based test requires much less precision, in fact, only $2$ digits of precision and a box of side length $2\times 10^{-2}$ are enough to certify this solution.


\section{Conclusion and possible extensions}\label{sec:conclusion}

In this article, we provided a framework to certify isolated nonsingular solutions of square systems of equations involving analytic functions based on explicitly described oracles. We demonstrated that systems with \Dfinite\ functions as ingredients fall into this framework.
We provide a proof-of-concept implementation of the resulting algorithms based on  Krawczyk method and $\alpha$-theory that uses existing software for \Dfinite\ functions. 

Which method performs better in practice depends on one's application. We note, however, that as the ingredients in one's system become more involved the machinery of $\alpha$-theory becomes harder to make effective and use than Krawczyk method. 

\smallskip

This work addresses the case of nonsingular solutions. The question of certification of singular isolated solutions has been studied in many formulations---see, e.g., \cite{mantzaflaris2011deflation, li2014verified, akoglu2018certifying}---for polynomial systems. We remark that certification of singular solutions of analytic systems would require strong additional assumptions---see, e.g., \cite{lee2019isolation}---and new techniques.  The related problem of certifying clusters of roots has been studied using subdivision---see, e.g., \cite{Becker:2016}---for univariate complex polynomials.

\smallskip

It would be also interesting to extend our methods to the systems with \emph{holonomic functions in many variables}\footnote{For a definition of a holonomic function in many variables and a concise review of a restriction algorithm using the theory of $D$-modules see \cite[\S 6.4]{hibi2014grobner}.}  as ingredients.
Given a restriction algorithm of a holonomic function to a line, one can approximate the value of this function at a point; however, one would need an ability to approximate the values of the original function \emph{over a region}. This subtlety requires effective tools that appear to be available only in the univariate case.   
We note that it may be possible to envision a more general ``effective complex analysis'' via multivariate majorant series as in \cite{van2003majorants}. 
In fact, if such effective  analysis is available, our framework can be extended to Pfaffian functions introduced by Khovanskii in~\cite{khovanskiui1991fewnomials} or, more generally, Noetherian functions (see e.g.,~\cite{gabrielov2004complexity} for a definition).


\medskip 
\noindent {\bf Acknowledgements.} We would like to thank ICERM for accommodating all three authors during the semester on Nonlinear Algebra. We also would like to thank Fr\'ed\'eric Chyzak and Jon Hauenstein for helpful discussions. Research of KL is supported in part by NSF grant CCF-1708884 and DMS-1719968. Research of AL is supported in part by NSF grant DMS-1719968.  Research of MB is supported in part by NSF grant CCF-1527193.

\bibliography{ref}
\bibliographystyle{plain}

\end{document}